\theoremstyle{remark}
\newtheorem{assumption}{Assumption}
\newtheorem{theorem}{Theorem}
\newtheorem{defi}{Definition}
\title{\LARGE \bf
Cyberattack Detection for Nonlinear Leader-Following Multi-Agent Systems Using Set-Membership Fuzzy Filtering
}
\author{Mahshid Rahimifard, Amir~M.~Moradi Sizkouhi, and Rastko~R.~Selmic,~\IEEEmembership{Senior Member,~IEEE}
\thanks{M. Rahimifard, A. M. Moradi Sizkouhi, and R. R. Selmic are with with the Department of Electrical and Computer Engineering, Concordia University, Montreal, QC, Canada.
        {\tt\small seyedehmahshid.rahimifard@concordia.ca},
        {\tt\small amirmohammad.moradis@concordia.ca},
        {\tt\small rastko.selmic@concordia.ca}.%
}
}
\begin{document}

\maketitle
\thispagestyle{empty}
\pagestyle{empty}

\begin{abstract}

This paper is concerned with cyberattack detection in discrete-time, leader-following, nonlinear, multi-agent systems subject to unknown but bounded (UBB) system noises. The Takagi–Sugeno (T-S) fuzzy model is employed to approximate the nonlinear systems over the true value of the state. A distributed cyberattack detection method, based on a new fuzzy set-membership filtering method, which consists of two steps, namely a prediction step and a measurement update step, is developed for each agent to identify two types of cyberattacks at the time of their occurrence. The attacks are replay attacks and false data injection attacks affecting the leader-following consensus. We calculate an estimation ellipsoid set by updating the prediction ellipsoid set with the current sensor measurement data. Two criteria are provided to detect cyberattacks based on the intersection between the ellipsoid sets. If there is no intersection between the prediction set and the estimation set of an agent at the current time instant, a cyberattack on its sensors is declared. Control signal or communication signal data of an agent are under a cyberattack if its prediction set has no intersection with the estimation set updated at the previous time instant. Recursive algorithms for solving the consensus protocol and calculating the two ellipsoid sets for detecting attacks are proposed. Simulation results are provided to demonstrate the effectiveness of the proposed method.

\end{abstract}

\section{INTRODUCTION}

Multi-Agent Systems (MAS) have wide range of applications such as internet of things (IoT), electrical grids, water distribution systems, transportation systems, autonomous vehicles and Unmanned Aerial Vehicles (UAVs) \cite{b2}. Reaching consensus in a distributed manner is a fundamental problem in MAS. Some distributed and decentralized methods for attack detection have been proposed \cite{b13,b14,b15,b16}. The agents transmit their data to neighboring agents through communication channels in distributed consensus protocols, and these channels are vulnerable to cyberattacks.

In \cite{b13} a distributed method to detect attacks in the communication network for the distributed control of interconnected systems has been proposed. In this study, only the local knowledge of the system is needed. However, through this method, stealthy attacks cannot be detected.
In \cite{b18,b19,b20,b21,b22}, secure designs against Denial of Service (DoS) attacks in a centralized setting have been studied. In \cite{b3}, an investigation on a distributed event-triggered secure cooperative control of linear multi-agent systems under DoS attacks has been conducted. This paper studies how to achieve average consensus in the presence of DoS attacks and does not detect the attacks. 


The problem of distributed simultaneous fault detection and leader-following consensus control for multi-agent systems has been investigated in \cite{b7}. In this study, a single module is used that conducts both tasks of fault detection and control objectives, simultaneously. Also, the proposed fault isolation method detect the faulty agent as well as the kind of fault. However, this method is restricted to the actuator and sensor faults to the linear systems with undirected topology. The authors of \cite{b5} have studied the leader-following consensus problem for heterogeneous multi-agent systems subject to both sensor and actuator attacks. In this study, the system is linear and there have been considered some restricting conditions.


In \cite{b9} a state-dependent event-triggered control strategy for time-varying MASs over a finite horizon has been designed for the first time. The only attack considered in this study is the false data injection attack to the linear multi-agent system with undirected topology.


Most of attack detection approaches, which are based on the state estimation, necessitate systems noises to be in a stochastic framework, and this leads to a probabilistic state estimation. For many real-world applications, accuracy in the state estimation is crucial. However, estimation based on probabilistic approach, such as Kalman filtering method, necessitates the use of mean and variance to describe the state distributions modelled as random variables (usually white and Gaussian perturbations). Consequently, considering unknown but bounded (UBB) noises is a much more appropriate approach to modeling state distributions.

Additionally, a common attack detection method, called the performance index test ($\chi^2$-detector), uses a residual signal to determine if the estimated behavior differs from that predicted by a model. Due to the nature of the Kalman filtering technique, the estimated and predicted states are single vectors and as a result, they cannot guarantee that a state is included in some region. Also, as the resulting UBB noises are sub-optimal for Kalman-type filtering, the reliability of attack detection is decreased. As a result of the need for set-valued estimation, the ellipsoidal state estimation technique was developed \cite{b34}. This method, known as the set-membership or set-valued state estimation filtering approach, has been extensively studied in filtering problems \cite{b35,b36,b37,b38,b39,b40,b41} and provides a set of state estimates in state space that contains the system's true state \cite{b42,b43}. By using convex optimization approaches, an optimal ellipsoid with minimal size can be determined for set-membership estimation, improving state estimation and detection performance.

The authors of \cite{b8} have studied a cyberattack detection method for the linear networked control systems through which for the first time, simultaneously, using the set-membership filtering for the purpose of the attack detection and distinguishing attacks on control signals from attacks on measurement outputs have been considered. However, they only considered the attack detection problem and there is no approach to the control of the system and the system is a single agent.

Except a few publications \cite{b47,b27}, most research on set-membership filtering considers linear systems \cite{b44,b45,b33,b8}. Linearization should best fit the nonlinear functions over a state estimate set rather than a state estimate point when we use the set-membership framework. The authors of \cite{b47}, linearized the nonlinear dynamics around the current estimate, then bounded the remaining terms by using interval mathematics and finally incorporated the remaining bounds as additions to the process or measurement noise bounds. Due to linearization around the estimated value of the state rather than the true value, the above approximations, bring a base point error \cite{b49}.

There are few works on the detection of replay attacks and to the best of our knowledge, all the existing works have been only done on linear systems. Therefore, detection of these attacks for the nonlinear systems are of prime importance since real-world systems are mostly nonlinear. The fuzzy model of Takagi-Sugeno (T-S) is an effective and universal approximator for a certain class of nonlinear dynamic systems.

Therefore, in this paper, we use it to approximate nonlinear systems \cite{b50,b51}. We linearize the nonlinear systems over the true value of state and eliminate the base point error. Our objective is to design a simultaneous distributed attack detection strategy and leader-following consensus control based on a new two-step fuzzy set-membership filtering approach in a distributed framework. By utilizing the fuzzy modeling approach and the S-procedure technique \cite{b28}, we determine bounding ellipsoidal sets for each agent by a recursive algorithm in state-space which guarantee the always enclosing of the system's true state \cite{b42,b43}, regardless of UBB noises, assuming no attacks are being made on the agent.

Each agent has a prediction and a measurement update step in its state estimation algorithm. The following two criteria are then used to detect cyberattacks:
\begin{enumerate}
    \item When a cyberattack violates the control signal of any agent, the prediction ellipsoid set of that agent and its estimation ellipsoid set, updated with the previous measurement output, do not intersect.
    \item When a cyberattack violates the sensor signal of any agent, the prediction ellipsoid set of that agent and its estimation ellipsoid set, updated at the current time instant, do not intersect.
\end{enumerate}



Comparing with the previous works, the contributions of our work are as follows:
\begin{itemize}
\item To the best of our knowledge, we studied the attack detection problem of nonlinear multi-agent systems subject to replay attacks for the first time in the literature.
\item We developed the fuzzy set-membership filtering approach for detection of the attacks.
\item We considered false data injection attacks on the control signal and communication networks, as well as replay attacks on the sensor measurement data.
\item We can distinguish attacks on control signals from attacks on measurement outputs.
\item We also are able to mitigate the effects of the attacks and recover the system performance.
\item Moreover, we ensure that we achieve the control goal which is achieving all the agent states to the leader-following consensus.
\end{itemize}
\section{Problem Formulation}
Interaction and communication is modelled as a connected directed graph $\mathcal{G}=\{\mathcal{V} ,\mathcal{E} , \mathcal{A}\}, \mathcal{V}=\{1,2,...,N\}, \mathcal{E}=\{i,j, \left(i,j\right) \in \mathcal{V}\}$  and $\mathcal{A}=\left(a_{ij}\right)\in \mathbb{R}^{N\times N}$which are the vertex set, the directed edge set and the weighted adjacency matrix of $\mathcal{G}$, respectively. The weights are defined as  $a_{ij}>0$, if $(j,i)\in\mathcal{E}$ and $a_{ij} = 0$, otherwise. A node from which an edge goes to node $i$ is a neighbor of node $i$. The set of the neighbors of node $i$ are indicated by $N_i$, where $N_i=\{j|\left(j,i\right)\in\mathcal{E}$. Moreover, the Laplacian matrix $\mathcal{L}=\left(l_{i,j}\right)\in\mathbb{R}^{N\times N}$ is defined as $\mathcal{L}=\mathcal{D}-\mathcal{A}$ and $\mathcal{D}=\operatorname{diag}_{N}^{i}\left\{d_{i}\right\}$, with $d_{i}=\sum_{j=1}^{N} a_{i j}$.

Consider a discrete-time nonlinear multi-agent system with $N$ agents, and the dynamics of agent, $i,i\in\{1,...,N\}$ is given as
  \begin{equation}
   \begin{cases}
          {x}_{i}(k+1)=f_i(x_i(k))+G_iu_i(k)+I_{i}(x_{i})\omega_i(k)\\
           y_i(k)=h_{i}(x_i(k))+F_{i}(x_{i})v_{i}(k),
  \end{cases}
  \label{eq1}
  \end{equation}
where $x_i(k)\in\mathbb{R}^{n_x}, u_i(k)\in\mathbb{R}^{n_u}$ and $y_i(k)\in\mathbb{R}^{n_y}$ represent state variables, control inputs and measurable output, respectively. The functions $f_i\left(x_{i}(k)\right)$, $I_{i}\left(x_{i}(k)\right), h_{i}\left(x_{i}(k)\right)$, and $F_{i}\left(x_{i}(k)\right)$ are the functions of $x_{i}(k)$ with $f_{i}(0)=0$, $I_{i}(0)=0, h_{i}(0)=0$, and $F_{i}(0)=0$ and $G_i$’s are known matrces. A process uncertainty is denoted by $\omega_i(k)\in\mathbb{R}^{n_\omega}$, and $v_{i}(k)\in\mathbb{R}^{n_v}$ as a measurement noise which are assumed to be confined to specified ellipsoidal sets.

\begin{defi}
An ellipsoidal set has the form $\mathcal{X} \triangleq\{\zeta: \zeta=$ $c+\Xi z,\|z\| \leq 1\}$, where $c \in \mathbb{R}^{n_{x}}$ is the center and $\Xi \in \mathbb{R}^{n_{x} \times m}$ with $\operatorname{rank}(\Xi)=m \leq n_{x}$ is its shape matrix. Assume that $\Xi$ is a lower triangular matrix whose diagonal elements all are positive. According to a Cholesky factorization, it can be seen that $P=\Xi \Xi^{\mathrm{T}}>0$ and $z^{\mathrm{T}} z=(\zeta-c)^{\mathrm{T}} P^{-1}(\zeta-c) \leq 1$. Consequently, the ellipsoidal set can also be represented as $\mathcal{X} \triangleq\left\{\zeta:(\zeta-c)^{\mathrm{T}} P^{-1}(\zeta-c) \leq 1\right\}$. The size of the ellipsoid is dependent on the squares shape matrix $P$ and can be calculated as $\operatorname{Tr}(P)$, which is the sum of the squared semiaxes lengths \cite{b33}.
\end{defi}
\begin{assumption}
The process noise $w_i(k)$ is UBB, which is assumed to belong to the following specified ellipsoidal sets:
\begin{equation}
    \begin{aligned}
        &\mathcal{W}_{i}(k) \triangleq\left\{w_{i}(k): {w_{i}}(k)^{\mathrm{T}} Q_{i}(k)^{-1} w_{i}(k) \leq 1\right\} \\
        &\mathcal{V}_{i}(k) \triangleq\left\{v_{i}(k): {v_{i}^{\mathrm{T}}(k)} R_{i}^{-1}(k) v_{i}(k) \leq 1\right\},
    \end{aligned}\label{eq2}
\end{equation}
where $Q_{i}(k) = Q_{i}(k)^T>0$ and $R_{i}(k) = R_{i}(k)^T>0$ are known matrices with compatible dimensions.\label{assump1}
\end{assumption}

The system model for the $i$th agent is presented by fuzzy IF-THEN rules.

Plant Rule $l_{i}:$ IF $\theta_{i,1}(k)$ is $\mu_{l_{i},1}$ and $\theta_{i,2}(k)$ is $\mu_{l_{i},2} \ldots$ and $\theta_{i,q}(k)$ is $\mu_{l_{i},q}$, THEN
\begin{equation}
    \begin{cases}
        x_{i}(k+1)=A_{l_{i}} x_{i}(k)+B_{l_{i}} u_{i}(k)+M_{l_{i}}\omega_i(k)\\
        y_{i}(k)=C_{l_{i}}x_{i}(k)+D_{l{i}}v_{i}(k),
    \end{cases}\label{eq8}
\end{equation}
where $l_{i}=1, \ldots, r$ ($r$ stands for the total number of plant IF-THEN rules), $\mu_{l_{i},1}, \ldots, \mu_{l_{i},q}$ are fuzzy sets, $\theta_{i}(k)=$ $\left[\theta_{i,1}^{T}(k) \theta_{i,2}^{T}(k) \cdots \theta_{i,q}^{T}(t)\right]^{T}$ denotes the premise variable, $A_{l_{i}}, B_{l_{i}}, M_{l_{i}}, C_{l_{i}}$ and $D_{l_{i}}$ are the system matrices with appropriate dimensions. The above-mentioned system can be inferred as follows:
\begin{equation}
\small{
    \left\{\begin{aligned}
    {x}_{i}(k+1)=&\sum_{l_{i}=1}^{r} g_{l_{i}}\left(\theta_{i}(k)\right)A_{l_{i}} x_{i}(k)\\
    &+\sum_{l_{i}=1}^{r} g_{l_{i}}\left(\theta_{i}(k)\right)B_{l_{i}} u_{i}(k)\\
    &+\sum_{l_{i}=1}^{r} g_{l_{i}}\left(\theta_{i}(k)\right)M_{l{i}}\omega_{i}(k)\\
    y_{i}(k)=&\sum_{l_{i}=1}^{r} g_{l_{i}}\left(\theta_{i}(k)\right)C{l_{i}}x_{i}(k)\\
    &+\sum_{l_{i}=1}^{r} g_{l_{i}}\left(\theta_{i}(k)\right)D_{l{i}}v_{i}(k)
    \end{aligned}\right.,\label{eq9}
    }
\end{equation}
where $g_{l_{i}}\left(\theta_{i}(k)\right)=\psi_{l_{i}}\left(\theta_{i}(k)\right) / \sum_{l_{i}=1}^{r} \psi_{l_{i}}\left(\theta_{i}(k)\right)$ is the normalized weight for each rule with $\psi_{l_{i}}\left(\theta_{i}(k)\right)=$ $\Pi_{v=1}^{q} \mu_{l_{i} v}\left(\theta_{i v}(k)\right) \geqslant 0$ and $\sum_{l_{i}=1}^{r} g_{l_{i}}\left(\theta_{i}(k)\right)=1$, where $\mu_{l_{i} v}\left(\theta_{i v}(k)\right)$ is the grade of membership of $\theta_{i v}(k)$ in $\mu_{l_{i} q}$.

By considering fuzzy model as an interpolation of $r$ linear systems through the membership function $g_{l_{i}}\left(\theta_{i}(k)\right)$, we can approximate the nonlinear system. Therefore, the nonlinear multi-agent system can be described as
\begin{equation}
\small{
    \left\{\begin{aligned}
x_{i}(k+1)=& f_i(x_i(k))+G_iu_i(k)+I_{i}(x_{i})\omega_i(k) \\
=& \sum_{l_{i}=1}^{r} g_{l_{i}}\left(\theta_{i}(k)\right) A_{l_{i}} x_{i}(k)+\Delta f_i\left(x_{i}(k)\right)\\&+\sum_{l_{i}=1}^{r} g_{l_{i}}\left(\theta_{i}(k)\right)B_{l_{i}} u_{i}(k)\\
&+\sum_{l_{i}=1}^{r} g_{l_{i}}\left(\theta_{i}(k)\right)M_{l{i}}\omega_{i}(k)\\
&+\Delta I_i\left(x_{i}(k)\right)\omega_{i}(k) \\
y_{i}(k)=&\sum_{l_{i}=1}^{r} g_{l_{i}}\left(\theta_{i}(k)\right)C_{l_{i}}x_{i}(k)+\Delta h_i\left(x_{i}(k)\right)\\
    &+\sum_{l_{i}=1}^{r} g_{l_{i}}\left(\theta_{i}(k)\right)D_{l{i}}v_{i}(k)\\
    &+\Delta F_i\left(x_{i}(k)\right)v_{i}(k),
\end{aligned}\right.
}\label{eq10}
\end{equation}
where
\begin{equation}
\small{
    \left\{\begin{aligned}
\Delta f_{i}\left(x_{i}(k)\right)=&f_{i}\left(x_{i}(k)\right)-\sum_{l_i=1}^{r} g_{l_{i}}\left(\theta_{i}(k)\right) A_{l_{i}} x_{i}(k) \\
\Delta I_{i}\left(x_{i}(k)\right)=&I_{i}\left(x_{i}(k)\right)-\sum_{l_i=1}^{r} g_{l_{i}}\left(\theta_{i}(k)\right) M_{l_{i}} \\
\Delta h_{i}\left(x_{i}(k)\right)=&h_{i}\left(x_{i}(k)\right)-\sum_{l_i=1}^{r} g_{l_{i}}\left(\theta_{i}(k)\right) C_{l_{i}} x_{i}(k) \\
\Delta F_{i}\left(x_{i}(k)\right)=&F_{i}\left(x_{i}(k)\right)-\sum_{l_i=1}^{r} g_{l_{i}}\left(\theta_{i}(k)\right) D_{l_{i}} \\
\end{aligned}\right.
}
\end{equation}
denote the approximation (or interpolation) errors between the nonlinear system and the fuzzy model.
\begin{assumption}
According to \cite{b27}, we assume
\begin{equation}
\left\{\begin{aligned}
    &\Delta f_i\left(x_{i}(k)\right)=H_{i,1} \Delta_{i,1} E_{i,1} x_{i}(k)\\
    &\Delta I_i\left(x_{i}(k)\right)=H_{i,2} \Delta_{i,2} E_{i,2}\\
    &\Delta h_i\left(x_{i}(k)\right)=H_{i,3} \Delta_{i,3} E_{i,3} x_{i}(k)\\
    &\Delta F_i\left(x_{i}(k)\right)=H_{i,4} \Delta_{i,4} E_{i,4},
    \label{eq12}
    \end{aligned}\right.
\end{equation}
where $H_{i}$ and $E_{i}$ are known matrices, and $\Delta_{i}$ is unknown but bounded with $\left\|\Delta_{i}\right\| \leq 1$.\label{assump2}
\end{assumption}

We are interested in constructing the fuzzy-based leader following consensus protocol, which utilizes the estimated state instead of the full system state. First, consider the leader agent's dynamic by the following IF-THEN rules.\\
Plant Rule $l_{i}:$ IF $\theta_{i,1}(k)$ is $\mu_{l_{i},1}$ and $\theta_{i,2}(k)$ is $\mu_{l_{i},2} \ldots$ and $\theta_{i,q}(k)$ is $\mu_{l_{i},q}$, THEN
\begin{equation}
    x^{l}(k+1)=A_{l_{i}}^l x^l(k),\label{eq11}
\end{equation}
where $x^{l}(k) \in \mathbb{R}^{n_{x}}$ is the state of the leader, and $A_{l_{i}}^l$ are the system matrices with appropriate dimensions. It is assumed that the leader's dynamics are not subject to UBB process noise. The abovementioned system can be inferred as follows:
\begin{equation}
    x^{l}(k+1)=\sum_{l_{i}=1}^{r} g_{l_{i}}\left(\theta_{i}(k)\right)A_{l_{i}}^l x^l(k).\label{ext}
\end{equation}
\begin{assumption}
The initial states $x_{i}(0)$ and $x^{l}(0)$ are assumed to belong to a given ellipsoid
\begin{equation}
\small{
\begin{aligned}
    \mathcal{X}_{i}(0 \mid 0) \triangleq\left\{x\right._{i}(0):
    &\left(x_{i}(0)-\hat{x}_{i}(0 \mid 0)\right)^{\mathrm{T}} P_{i}(0 \mid 0)^{-1}\\
    &\times\left.\left(x_{i}(0)-\hat{x}_{i} (0\mid 0)\right) \leq 1\right\}\\
    \mathcal{U}_{i}(0) \triangleq\left\{x\right._{i}(0):
    &\left(x_{i}(0)-x^{l}(0)\right)^{\mathrm{T}} U_{i}(0)^{-1}\\
    &\times\left.\left(x_{i}(0)-x^{l} (0)\right) \leq 1\right\},
\end{aligned}
}
\end{equation}
where $\hat{x}_{i}(0 \mid 0)$ is the given estimate of $x_{i}(0)$, and $P_{i} (0\mid 0)=P_{i}(0 \mid 0)^{\mathrm{T}} \succ 0$ and $U_{i} (0\mid 0)=U_{i}(0 \mid 0)^{\mathrm{T}} \succ 0$ are known matrices.\label{assump3}
\end{assumption}

In this paper, we consider two kinds of attacks on the system.
\subsection{False Data Injection Attacks}
The original data packets are replaced by false ones when they are transferred from controllers to actuators or from another agent via communication channels.

\begin{equation}
    \begin{cases}
        u_i^c(k)=u_i(k)+u_i^a(k)\\
        \Bar{x}_j^c(k)=\Bar{x}_j(k)+\phi_j^a(k)\Bar{x}_j^a(k),\label{eq3}
    \end{cases}
\end{equation}
where $u_i(k)\in\mathbb{R}^{m_i}, u_i^a(k)$ and $u_i^c(k)$ are the uncompromised control input, unknown false data injected to the actuator of agent $i$, and the compromised input available to agent $i$. Moreover, $\Bar{x}_j^c$ is the corrupted neighbouring data, and in the presence of an attack on neighbouring channel $\phi_j^a$ is 1", otherwise it is "0".

\subsection{Replay Attacks}
\label{sub2-a}
A successful replay attack does not need a priori knowledge of the system components. It is assumed that the attacker can record sensor’s measurement data from $k_i$ untill $k_r$ with the window size $\tau = k_r-k_i$ in the first phase. Then, in the second phase, the attacker replays the recorded data to the system from $k = k_r + d$ untill the end of the attack at $k = k_f$, where $d$ is the delay between the recording time and replaying time. We model this attack according to \cite{b8} as
\begin{equation}
    a^{y_i}(k)=y_i(k-\tau)-y_i(k).\label{eq5}
\end{equation}
Thus, the sensor’s data affected by the attack is
\begin{equation}
    \tilde{y}_i(k)=y_i(k)+a^{y_i}(k).\label{eq6}
\end{equation}
We propose a distributed attack detector to detect the aforementioned types of attacks. The modules are tasked to detect attacks as well as ensure that the desired control specifications are satisfied. Also, the method can recover the system performance and mitigate the effects of the attacks. The structure of the system with the detector is shown in Fig. \ref{fig8}.
\begin{figure}[t]
    \centering
    \includegraphics[width=\columnwidth]{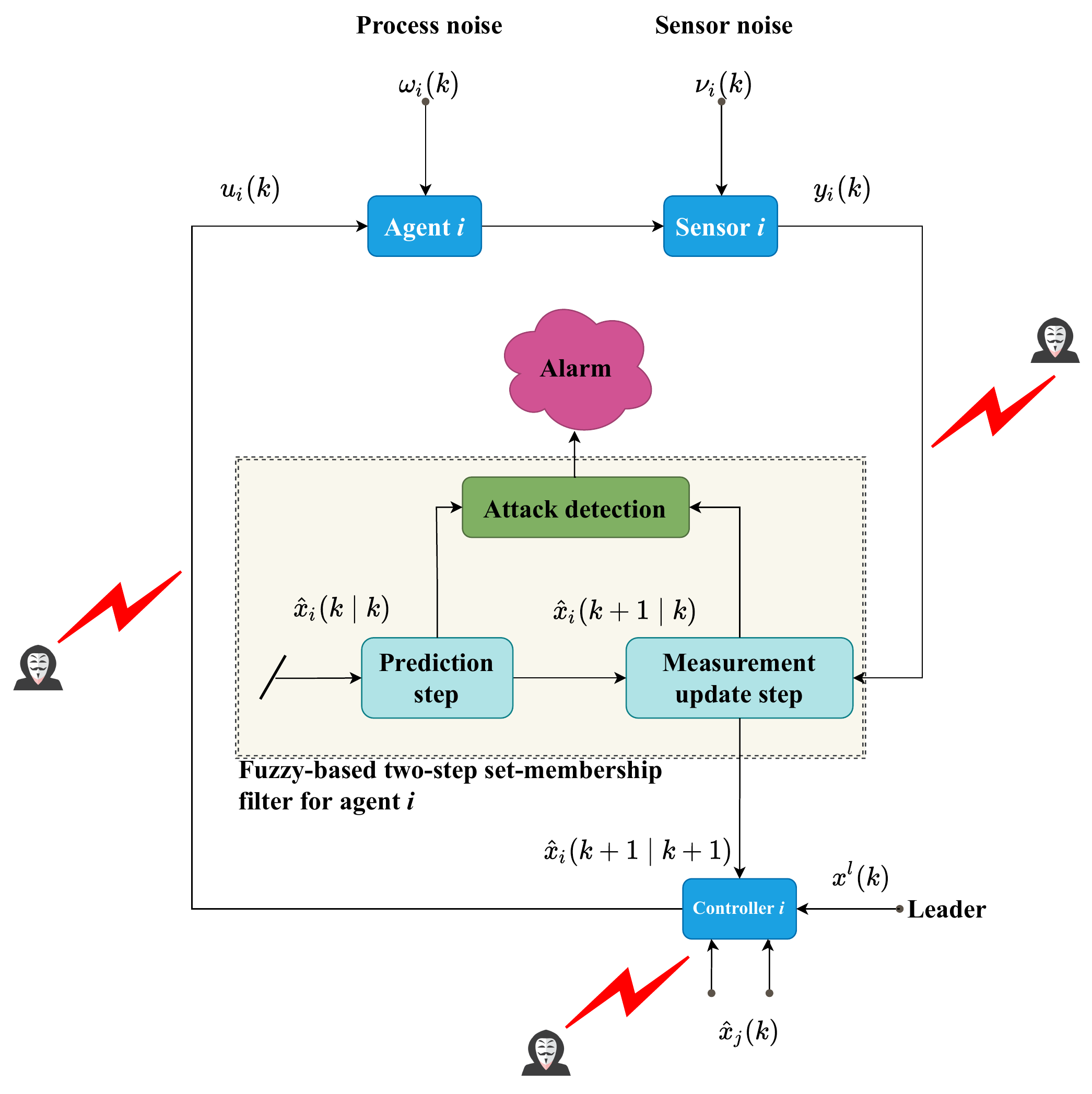}
    \caption{The structure of a leader-following MAS with a fuzzy-based set-membership filtering detection method.}
    \label{fig8}
\end{figure}

\section{Consensus Protocol and Fuzzy-Based Two-Step Set-Membership Estimation Method}
\subsection{Prediction Step}
First, the prediction filter is considered in the form of\\
Plant Rule $l_{i}:$ IF ${\hat{\theta}}_{i,1}(k)$ is $\mu_{l_{i},1}$ and $\hat{\theta}_{i,2}(k)$ is $\mu_{l_{i},2} \ldots$ and $\hat{\theta}_{i,q}(k)$ is $\mu_{l_{i},q}$, THEN
\begin{equation}
    \hat{x}_{i}(k+1 \mid k)=\hat{A}_{l_i} \hat{x}_{i}(k \mid k),\label{eq13}
\end{equation}
where $\hat{x}_{i}(k \mid k)$ is the estimation of the state $x_{i}(k)$, $\hat{A}_{l_i}$ is the fuzzy filter parameter to be determined and $\hat{\theta}_{i}(k)=\left\{\hat{\theta}_{i,1}(k), \hat{\theta}_{i,2}(k), \ldots, \hat{\theta}_{i,q}(k)\right\}$ are premise variables, which maybe functions of the state estimates. The overall fuzzy filter can be written from \eqref{eq13} as \cite{b25}, \cite{b26}
\begin{equation}
\small{
\begin{aligned}
\hat{x}_{i}(k+1 \mid k)=&\sum_{l_{i}=1}^{r} g_{l_{i}}\left(\hat{\theta}_{i}(k)\right)\hat{A}_{l_i} \hat{x}_{i}(k \mid k).
\end{aligned}
}\label{eq14}
\end{equation}
For the given state estimation ellipsoid set $\mathcal{X}_{i}(k \mid k)$ with the center $\hat{x}_{i}(k \mid k)$ and the shape matrix $\Xi_{i}(k \mid k)$, the real state $x_{i}(k)$ can be described by
\begin{equation}
    x_{i}(k)=\hat{x}_{i}(k \mid k)+\Xi_{i}(k \mid k) z_{i}.\label{eq15}
\end{equation}
Then, our goal is to obtain the prediction ellipsoid set
\begin{equation}
\small{
    \begin{aligned}
&\mathcal{X}_{i}(k+1 \mid k) \triangleq\left\{x_{i}(k+1):\right. \\
&\left.\left(x_{i}(k+1)-\hat{x}_{i}(k+1 \mid k)\right)^{\mathrm{T}} P_{i}^{-1}(k+1 \mid k)\right.\\
&\times\left.\left(x_{i}(k+1)-\hat{x}_{i}(k+1 \mid k)\right) \leq 1\right\}.
\end{aligned}
}\label{eq16}
\end{equation}
Note that the state $x_{i}(k+1)$ belongs to such an ellipsoid set for any value of the system noises in their specified sets.
\subsection{Measurement Update Step}
\label{sub3-b}
The update based on the current measurement is considered for the system \eqref{eq10}, which is in the form of\\
Plant Rule $l_{i}:$ IF $\hat{\theta}_{i,1}(k)$ is $\mu_{l_{i},1}$ and $\hat{\theta}_{i,2}(k)$ is $\mu_{l_{i},2} \ldots$ and $\hat{\theta}_{i,q}(k)$ is $\mu_{l_{i},q}$, THEN
\begin{equation}
\resizebox{\hsize}{!}{$
\begin{aligned}
\hat{x}_{i}(k+1\mid k+1)=&\hat{x}_{i} (k+1\mid k)+L_{l_i}\left(y_{i}(k+1)\right.\left.-\hat{y}_{i}(k+1 \mid k)\right),\label{eq17}
\end{aligned}
$}
\end{equation}
where $L_{l_i}$ is the filter parameter to be determined.
The overall fuzzy update can be written from \eqref{eq17} as
\begin{equation}
\small{
\begin{aligned}
\hat{x}_{i}(k+1\mid k+1)=&\hat{x}_{i} (k+1\mid k)+\sum_{l_{i}=1}^{r} g_{l_{i}}\left(\hat{\theta}_{i}(k)\right)L_{l_i}\\
&\times\left(y_{i}(k+1)-\hat{y}_{i}(k+1 \mid k)\right).\label{eq18}
\end{aligned}
}
\end{equation}
According to the prediction ellipsoid set $\mathcal{X}_{i}(k+1 \mid k)$ given by \eqref{eq16}, the state $x_{i}(k+1)$ can be written as
\begin{equation}
x_{i}(k+1)=\hat{x}_{i}(k+1 \mid k)+\Xi_{i}(k+1 \mid k) z_{i}.\label{eq19}
\end{equation}
Our objective is to update this prediction set with the one yielding from the current measurement $y_{i}(k+1)$. In other words, we look for an updated ellipsoid set $\mathcal{X}_{i}(k+1 \mid k+1)$ with the center $\hat{x}_{i}(k+1 \mid k+1)$ and the shape matrix $\Xi_{i}(k+1 \mid k+1)$ for the state $x_{i}(k+1)$, given by the current measurement information at the time instant $k+1$. Thus, the updated ellipsoid set should satisfy the condition
\begin{equation}
\small{
    \begin{aligned}
        (x_{i}(k+1)&-\hat{x}_{i}(k+1\mid k+1))^{\mathrm{T}} P_{i}^{-1}(k+1 \mid k+1)\\
        &\times\left(x_{i}(k+1)-\hat{x}_{i}(k+1 \mid k+1)\right) \leq 1,\label{eq20}
    \end{aligned}
    }
\end{equation}
whenever the output constraint
\begin{equation}
\small{
\begin{aligned}
    y_{i}(k+1)=&\sum_{l_{i}=1}^{r} g_{l_{i}}\left({\theta}_{i}(k)\right)C_{l_{i}}\left(\hat{x}_{i}(k+1 \mid k)+\Xi_{i}(k+1 \mid k) z_{i}\right)\\
    &+H_{i,3} \Delta_{i,3} E_{i,3}\left(\hat{x}_{i}(k+1\mid k)+\Xi_i(k+1\mid k)z_{i}\right)\\
    &+\left(\sum_{l_{i}=1}^{r} g_{l_{i}}\left({\theta}_{i}(k)\right)D_{l_{i}}+H_{i,4} \Delta_{i,4} E_{i,4}\right)v_{i}(k+1)\label{eq21}
    \end{aligned}
    }
\end{equation}
holds for some $\|z_{i}\| \leq 1$.
\subsection{Leader Following Consensus Protocol}
The distributed observer-based leader following consensus protocol \cite{b32} is
\begin{equation}
\small{
\begin{aligned}
    u_{i}(k)=K_{l_{i}}\Bigg(\sum_{j \in \mathcal{N}_{i}} &a_{i j}\left(\hat{x}_{i}(k \mid k)-\hat{x}_{j}(k \mid k)\right)\\
    &+\lambda_{i}\left(\hat{x}_{i}(k \mid k)-x^{l}(k)\right)\Bigg),
    \end{aligned}
    }\label{eq22}
\end{equation}
where $K_{l_{i}}$ are constant matrices to be designed, $a_{i j}$ is a nonnegative element of the weighted adjacency matrix $\mathcal{A}=\left[a_{i j}\right] \in \mathbb{R}^{N \times N}$. The adjacency matrix of the topology is selected as a binary matrix, where $a_{i j}=1$ if follower $i$ can receive information from follower $j$, otherwise $a_{i j}=0$.

The leader-following multi-agent system \eqref{eq1}, \eqref{eq11} achieves set-membership leader-following consensus under protocol \eqref{eq22} and two-step filter \eqref{eq14}, \eqref{eq18}, if the existence of desired gain sequences $K_{l_i}, \hat{A}_{l_{i}}$, and $L_{l_{i}}$ can guarantee that the one step ahead states $x_{i}(k+1), \forall i \in \mathcal{\nu}$ for all the followers reside in a leader state ellipsoid $\mathcal{U}_{i}(k+1)$ always enclosing all the followers' true states, where
\begin{equation}
\small{
\begin{aligned}
&\mathcal{U}_{i}(k+1) \triangleq\left\{x_{i}(k+1):\right.\\
&\left(x_{i}(k+1)-x^{l}(k+1)\right)^{T}U^{-1}(k+1)\\ &\left.\times\left(x_{i}(k+1)-x^{l}(k+1)\right) \leq 1\right\}
\end{aligned}
}\label{eq23}
\end{equation}
For the given leader ellipsoid set $\mathcal{U}_{i}(k)$ with the center $x^{l}(k)$ and the shape matrix $\xi_{i}(k)$, the state $x_{i}(k)$ can be described by
\begin{equation}
    x_{i}(k)=x^{l}(k)+\xi_{i}(k) z_{i}.\label{add17}
\end{equation}
\section{Attack Detection Using Set-Membership Fuzzy Filtering}
The proposed cyberattack detection problem is addressed in this section by developing a set-membership filter. First, we develop the prediction elipsoidal sets based on the leader following consensus protocol \eqref{eq22} and then update the prediction ellipsoid set with the current measurement. Also, we develop the leader elipsoidal set based on the leader following consensus protocol \eqref{eq22}. Finally, convex optimization problems and one algorithm are provided to expose the cyberattack diagnosis scheme.
\subsection{The Prediction Ellipsoid Set Design Based on Leader Following Consensus}
From the system model \eqref{eq10} and \eqref{eq12}, and the filter \eqref{eq14} and \eqref{eq15}, the prediction error $x_{i}(k+1)-\hat{x}_{i}(k+1 \mid k)$ can be written as\\
\begin{equation}
\small{
\begin{aligned}
x_{i}(k&+1)-\hat{x}_{i}(k+1 \mid k)\\
=&\left(\sum_{l_{i}=1}^{r} g_{l_{i}}\left(\theta_{i}(k)\right) A_{l_{i}}-\sum_{j_{i}=1}^{r} g_{j_{i}}\left(\hat{\theta}_{i}(k)\right)\hat{A}_{j_i}\right)\hat{x}_{i}(k \mid k)\\&+\sum_{l_{i}=1}^{r} g_{l_{i}}\left(\theta_{i}(k)\right) A_{l_{i}}\Xi_{i}(k \mid k) z_{i}+\sum_{l_{i}=1}^{r} g_{l_{i}}\left(\theta_{i}(k)\right)B_{l_{i}}
K_{l_{i}}\\&\times\left(\sum_{j \in \mathcal{N}_{i}} a_{i j}\left(\hat{x}_{i}(k\mid k)-\hat{x}_{j}(k\mid k)\right)+\lambda_{i}\left(\hat{x}_{i}(k \mid k)-x^{l}(k)\right)\right)\\
&+\sum_{l_{i}=1}^{r} g_{l_{i}}\left(\theta_{i}(k)\right)M_{l{i}}\omega_{i}(k)+H_{i,1} q_{i,1}+H_{i,1} q_{i,2}+H_{i,2} q_{i,3},
\end{aligned}
}\label{eq24}
\end{equation}
where
\begin{equation}
    \begin{aligned}
    &q_{i,1}=\Delta_{i,1} E_{i,1} \hat{x}_{i}(k \mid k)\\
    &q_{i,2}=\Delta_{i,1} E_{i,1} \Xi_{i}(k \mid k) z_{i}\\
    &q_{i,3}=\Delta_{i,2} E_{i,2}\omega_{i}(k).
    \end{aligned}
    \label{eq25}
\end{equation}
Denoting
\begin{equation}
    \eta_{i,1}(k)=\left[\begin{array}{llllll}
1 
&z_{i} 
&\omega_{i}(k)
&q_{i,1}
&q_{i,2}
&q_{i,3}
\end{array}\right]^T,\label{eq27}
\end{equation}
and considering the fact that $\sum_{l_{i}=1}^{r} g_{l_{i}}\left(\theta_{i}(k)\right)=1$, we can write \eqref{eq24} in a compact form as
\begin{equation}
\small{
\begin{aligned}
    x_{i}(k+1)&-\hat{x}_{i}(k+1 \mid k)\\=&\sum_{l_{i}=1}^{r} g_{l_{i}}\left(\theta_{i}(k)\right)\sum_{j_{i}=1}^{r} g_{j_{i}}\left(\hat{\theta}_{i}(k)\right)\\
    &\times
    \Gamma_{i,1,l_{i} j_{i}} \eta_{i,1}(k).
    \end{aligned}
    }\label{eq28}
\end{equation}
By denoting
\small{
\begin{equation}
\begin{aligned}
    P_{i,1,l_{i},j_{i}}=&\left(A_{l_{i}}-\hat{A}_{j_{i}}\right)\hat{x}_{i}(k\mid k)-B_{l_{i}}K_{l_{i}}\lambda_{i}x^{l}(k)\\&+B_{l_{i}}K_{l_{i}}\sum_{j=1}^{N}\tilde{l}_{i j}\hat{x}_{j}(k\mid k),
    \end{aligned}
\end{equation}
}
we have
\begin{equation}
\small{
\begin{aligned}
\Gamma_{i,1,l_{i} j_{i}}=\left[P_{i,1,l_{i},j_{i}}
\quad A_{l_{i}}\Xi_{i}(k\mid k) \quad M_{l_{i}}\quad H_{i,1}
\quad H_{i,1}\quad H_{i,2}\right],
\end{aligned}
}\label{eq29}
\end{equation}
where $\tilde{\mathcal{L}}=\mathcal{L}+\Lambda=\left[\tilde{l}_{i j}\right]_{N \times N}$ and $\Lambda=\operatorname{diag}\left\{\lambda_{1}, \lambda_{2}, \ldots, \lambda_{N}\right\}$.

According to \eqref{eq28}, we can write
\begin{equation}
\small{
    \begin{aligned}
\left(x_{i}(k+1)\right.&\left.-\hat{x}_{i}(k+1 \mid k)\right)^{T} P_{i}^{-1}(k+1 \mid k)\\
&\times\left(x_{i}(k+1)-\hat{x}_{i}(k+1 \mid k)\right) \\
=&\sum_{l_{i}=1}^{r} g_{l_{i}}\left(\theta_{i}(k)\right) \sum_{j_{i}=1}^{r} g_{j_{i}}\left(\hat{\theta}_{i}(k)\right)\\&\times\sum_{m_{i}=1}^{r} g_{m_{i}}\left(\theta_{i}(k)\right) \sum_{n_{i}=1}^{r} g_{n_{i}}\left(\hat{\theta}_{i}(k)\right)\\&\times\eta_{i,1}^{T}(k) \Gamma_{i,1,l_{i} j_{i}}^{T}P_{i}^{-1}(k+1 \mid k)\\&\times\Gamma_{i,1,m_{i} n_{i}}\eta_{i,1}(k).
\end{aligned}
}
\end{equation}
Therefore, we can achieve
\begin{equation}
\small{
    \begin{aligned}
\left(x_{i}(k+1)\right.&\left.-\hat{x}_{i}(k+1 \mid k)\right)^{T} P_{i}^{-1}(k+1 \mid k)\\
&\times\left(x_{i}(k+1)-\hat{x}_{i}(k+1 \mid k)\right) \\
\leq&\sum_{l_{i}=1}^{r} g_{l_{i}}\left(\theta_{i}(k)\right) \sum_{j_{i}=1}^{r} g_{j_{i}}\left(\hat{\theta}_{i}(k)\right)\\&\times\eta_{i,1}^{T}(k) \Gamma_{i,1,l_{i} j_{i}}^{T}P_{i}^{-1}(k+1 \mid k)\\&\times\Gamma_{i,1,l_{i} j_{i}}\eta_{i,1}(k).
\end{aligned}
}\label{eq102}
\end{equation}
The condition in \eqref{eq16} can be written as
\begin{equation}
\small{
\begin{aligned}
    \eta_{i,1}^{T}(k)\Bigg[&\sum_{l_{i}=1}^{r} g_{l_{i}}\left(\theta_{i}(k)\right) \sum_{j_{i}=1}^{r} g_{j_{i}}\left(\hat{\theta}_{i}(k)\right)\\&\times\Gamma_{i,1,l_{i} j_{i}}^{T}P_{i}^{-1}(k+1 \mid k)\Gamma_{i,1,l_{i} j_{i}}\\&-\operatorname{diag}\{1,0,0,0,0,0\}\Bigg] \eta_{i,1}(k) \leq 0.
    \end{aligned}
    }\label{eq30}
\end{equation}
With $\left\|\Delta_{i}\right\| \leq 1$, we can infer from \eqref{eq25} that
\begin{equation}
\left\{\begin{array}{l}
q_{i,1}^{T} q_{i,1}-\hat{x}_{i}^{T}(k \mid k) E_{i,1}^{T} E_{i,1} \hat{x}_{i}(k \mid k) \leq 0 \\
q_{i,2}^{T} q_{i,2}-z_{i}^{T} \Xi_{i}^{T}(k \mid k) E_{i,1}^{T} E_{i,1} \Xi_{i}(k \mid k) z_{i} \leq 0 \\
q_{i,3}^{T} q_{i,3}-\omega_{i}^{T}(k)E_{i,2}^{T} E_{i,2} \omega_{i}(k) \leq 0
\end{array}\right..\label{eq31}
\end{equation}
From \eqref{eq2}, \eqref{eq15} and \eqref{eq31}, the unknown variables $z_{i}, \omega_{i}(k), q_{i,1}, q_{i,2}$ and $q_{i,3}$ satisfy the following constraints:
\begin{equation}
    \begin{aligned}
&\|z_{i}\| \leq 1 \\
&w_{i}^{T}(k) Q_{i}^{-1}(k) w_{i}(k) \leq 1 \\
&q_{i,1}^{T} q_{i,1}-\hat{x}_{i}^{T}(k \mid k) E_{i,1}^{T} E_{i,1} \hat{x}_{i}(k \mid k) \leq 0 \\
&q_{i,2}^{T} q_{i,2}-z_{i}^{T} \Xi_{i}^{T}(k \mid k) E_{i,1}^{T} E_{i,1} \Xi_{i}(k \mid k) z_{i} \leq 0 \\
&q_{i,3}^{T} q_{i,3}-\omega_{i}^{T}(k)E_{i,2}^{T} E_{i,2} \omega_{i}(k) \leq 0,
\end{aligned}
\end{equation}
which can be written in $\eta_{i,1}(k)$ as
\begin{equation}
\begin{aligned}
&\eta_{i,1}^{T}(k) \operatorname{diag}\{-1, I, 0,0,0,0\} \eta_{i,1}(k) \leq 0 \\
&\eta_{i,1}^{T}(k) \operatorname{diag}\{-1,0, Q_{i}^{-1}(k), 0,0,0\} \eta_{i,1}(k) \leq 0 \\
&\eta_{i,1}^{T}(k)\operatorname{diag}\{-\hat{x}_{i}^{T}(k \mid k) E_{i,1}^{T} E_{i,1}\hat{x}_{i}(k \mid k),\\&\quad\quad\quad\quad\quad\quad\quad\quad\quad\quad0,0, I, 0,0\}\eta_{i,1}(k) \leq 0\\
&\eta_{i,1}^{T}(k)\operatorname{diag}\{0,-\Xi_{i}^{T}(k \mid k) E_{i,1}^{T} E_{i,1} \Xi_{i}(k \mid k),\\&\quad\quad\quad\quad\quad\quad\quad\quad\quad\quad0,0,I,0\}\eta_{i,1}(k) \leq 0.\\
&\eta_{i,1}^{T}(k)\operatorname{diag}\{0,0,-E_{i,2}^{T}E_{i,2},0,0,I\}\eta_{i,1}(k) \leq 0.
\end{aligned}\label{eq33}
\end{equation}
Applying S-procedure \cite{b28} to \eqref{eq30} and \eqref{eq33}, we can conclude that the inequality \eqref{eq30} holds if there exist nonnegative scalars $\tau_{i, 1}(k)$, $\tau_{i, 2}(k)$, $\tau_{i, 3}(k), \tau_{i, 4}(k)$ and $\tau_{i, 5}(k)$ such that
\begin{equation}
    \begin{aligned}
&\Gamma_{i,1,l_{i}j_{i}}^{T}P_{i}^{-1}(k+1 \mid k) \Gamma_{i,1,l_{i} j_{i}}-\operatorname{diag}\{1,0,0,0,0,0\}\\&-\tau_{i, 1}(k) \operatorname{diag}\{-1, I, 0,0,0,0\} \\
&-\tau_{i, 2}(k) \operatorname{diag}\{-1,0, Q_{i}^{-1}(k), 0,0,0\} \\
&-\tau_{i, 3}(k)\\
&\quad\times\operatorname{diag}\{-\hat{x}_{i}^{T}(k \mid k) E_{i,1}^{T} E_{i,1}\hat{x}_{i}(k \mid k), 0,0, I, 0,0\} \\
&-\tau_{i, 4}(k)\\
&\quad\times\operatorname{diag}\{0,-\Xi_{i}^{T}(k \mid k) E_{i,1}^{T} E_{i,1} \Xi_{i}(k \mid k),0,0,I,0\}\\
&-\tau_{i, 5}(k)\operatorname{diag}\{0,0,-E_{i,2}^{T}E_{i,2},0,0,I\}\leq 0.
\end{aligned}\label{eq34}
\end{equation}
Inequality \eqref{eq34} can be written in the following compact form
\begin{equation}
    \begin{aligned}
\Gamma_{i,1,l_{i} j_{i}}^{T}&P_{i}^{-1}(k+1 \mid k)\Gamma_{i,1,l_{i} j_{i}}-\operatorname{diag}\{1-\tau_{i,1}(k)\\&-\tau_{i, 2}(k)-\tau_{i, 3}(k)\hat{x}_{i}^{T}(k \mid k) E_{i,1}^{T} E_{i,1}\hat{x}_{i}(k \mid k),\\&\tau_{i, 1}(k)I-\tau_{i, 4}(k)\Xi_{i}^{T}(k \mid k) E_{i,1}^{T} E_{i,1} \Xi_{i}(k \mid k),\\&\tau_{i, 2}(k)Q_{i}^{-1}(k)-\tau_{i, 5}(k)E_{i,2}^{T}E_{i,2},\\
&\tau_{i, 3}(k)I,\tau_{i, 4}(k)I,\tau_{i, 5}(k)I\}\leq 0.
\end{aligned}\label{eq35}
\end{equation}
Finally, denoting
\begin{equation}
    \begin{aligned}
\Theta_{i,1}&(k)\\
=&\operatorname{diag}\{ 1-\tau_{i, 1}(k)-\tau_{i, 2}(k)\\&-\tau_{i, 3}(k)\hat{x}_{i}^{T}(k \mid k) E_{i,1}^{T} E_{i,1}\hat{x}_{i}(k \mid k),\tau_{i, 1}(k)I\\&-\tau_{i, 4}(k)\Xi_{i}^{T}(k \mid k) E_{i,1}^{T} E_{i,1} \Xi_{i}(k \mid k),\\&\tau_{i, 2}(k)Q_{i}^{-1}(k)-\tau_{i, 5}(k)E_{i,2}^{T}E_{i,2},\tau_{i, 3}(k)I,\\
&\tau_{i, 4}(k)I,\tau_{i, 5}(k)I\},
\end{aligned}\label{eq36}
\end{equation}
we can write \eqref{eq35} as
\begin{equation}
    \begin{aligned}
&\Gamma_{i,1,l_{i} j_{i}}^{T}P_{i}^{-1}(k+1 \mid k) \Gamma_{i,1,l_{i} j_{i}}-\Theta_{i,1}(k)\leq 0.
\end{aligned}\label{eq37}
\end{equation}
By using Schur complements, \eqref{eq37} is equivalent to
\begin{equation}
    \left[\begin{array}{cc}
-P_{i}(k+1 \mid k) & \Gamma_{i,1,l_{i} j_{i}} \\
\Gamma_{i,1,l_{i} j_{i}}^{\mathrm{T}} &-\Theta_{i,1}(k)
\end{array}\right] \leq 0.\label{eq38}
\end{equation}

Moreover, from the system model \eqref{eq10}, \eqref{eq11}, and \eqref{add17}, and by considering \eqref{eq25} the error $x_{i}(k+1)-x^{l}(k+1 \mid k)$ can be written as
\begin{equation}
\small{
\begin{aligned}
x_{i}(k&+1)-x^{l}(k+1)\\
=&\sum_{l_{i}=1}^{r} g_{l_{i}}\left(\theta_{i}(k)\right) A_{l_{i}}\left(x^{l}(k)+\xi_{i}(k) z_{i}\right)+\sum_{l_{i}=1}^{r}g_{l_{i}}\left(\theta_{i}(k)\right)B_{l_{i}}
K_{l_{i}}\\&\times\left(\sum_{j \in \mathcal{N}_{i}} a_{i j}\left(\hat{x}_{i}(k\mid k)-\hat{x}_{j}(k\mid k)\right)+\lambda_{i}\left(\hat{x}_{i}(k \mid k)-x^{l}(k)\right)\right)\\
&+\sum_{l_{i}=1}^{r} g_{l_{i}}\left(\theta_{i}(k)\right)M_{l{i}}\omega_{i}(k)-\sum_{l_{i}=1}^{r} g_{l_{i}}\left(\theta_{i}(k)\right)A_{l_{i}}^l x^l(k)\\
&+H_{i,1}q_{i,1}+H_{i,1}q_{i,2}+H_{i,2}q_{i,3}.
\end{aligned}
}\label{eq65}
\end{equation}
Considering the fact that $\sum_{l_{i}=1}^{r} g_{l_{i}}\left(\theta_{i}(k)\right)=1$, we can write \eqref{eq65} in a compact form as
\begin{equation}
\begin{aligned}
    x_{i}(k+1)&-x^{l}(k+1)=\sum_{l_{i}=1}^{r} g_{l_{i}}\left(\theta_{i}(k)\right)\sum_{j_{i}=1}^{r}
    \Gamma_{i,2,l_{i}} \eta_{i,1}(k).
    \end{aligned}\label{eq67}
\end{equation}
By denoting
\begin{equation}
\begin{aligned}
    P_{i,2,l_{i}}=&\left(A_{l_{i}}+A_{l{i}}^{l}-B_{l_{i}}K_{l_{i}}\lambda_{i}\right)x^{l}(k)\\&+B_{l_{i}}K_{l_{i}}\sum_{j=1}^{N}\tilde{l}_{i j}\hat{x}_{j}(k\mid k),
    \end{aligned}
\end{equation}
we have
\begin{equation}
\small{
\begin{aligned}
\Gamma_{i,2,l_{i}}=\left[P_{i,2,l_{i}}
\quad A_{l_{i}}\xi_{i}(k) \quad M_{l_{i}} \quad H_{i,1} \quad H_{i,1} \quad H_{i,2}\right].
\end{aligned}
}\label{eq68}
\end{equation}
According to \eqref{eq67}, we can write
\begin{equation}
\small{
    \begin{aligned}
\left(x_{i}(k+1)\right.&\left.-x^{l}(k+1)\right)^{T} U_{i}^{-1}(k+1)\\
&\times\left(x_{i}(k+1)-x^{l}(k+1)\right) \\
=&\sum_{l_{i}=1}^{r} g_{l_{i}}\left(\theta_{i}(k)\right) \sum_{j_{i}=1}^{r} g_{j_{i}}\left(\theta_{i}(k)\right)\\&\times\eta_{i,1}^{T}(k) \Gamma_{i,2,l_{i}}^{T}U_{i}^{-1}(k+1)\\&\times\Gamma_{i,2,j_{i}}\eta_{i,1}(k).
\end{aligned}
}
\end{equation}
Therefore, we can achieve
\begin{equation}
\small{
    \begin{aligned}
\left(x_{i}(k+1)\right.&\left.-x^{l}(k+1)\right)^{T} U_{i}^{-1}(k+1)\\
&\times\left(x_{i}(k+1)-x^{l}(k+1)\right) \\
\leq&\sum_{l_{i}=1}^{r} g_{l_{i}}\left(\theta_{i}(k)\right) \eta_{i,1}^{T}(k) \Gamma_{i,2,l_{i}}^{T}U_{i}^{-1}(k+1)\\&\times\Gamma_{i,2,l_{i}}\eta_{i,1}(k).
\end{aligned}
}\label{eq103}
\end{equation}
Therefore, the condition in \eqref{eq23} can be written as
\begin{equation}
\begin{aligned}
    \eta_{i,1}^{T}(k)\Bigg[&\sum_{l_{i}=1}^{r} g_{l_{i}}\left(\theta_{i}(k)\right)\Gamma_{i,2,l_{i}}^{T}U_{i}^{-1}(k+1)\Gamma_{i,2,l_{i}}\\&-\operatorname{diag}\{1,0,0,0,0,0\}\Bigg] \eta_{i,1}(k) \leq 0.
    \end{aligned}\label{eq71}
\end{equation}
Applying S-procedure to \eqref{eq33} and \eqref{eq71}, we can conclude that the inequality \eqref{eq71} holds if there exist nonnegative scalars $\tau_{i, 6}(k)$, $\tau_{i, 7}(k)$, $\tau_{i, 8}(k), \tau_{i, 9}(k)$ and $\tau_{i, 10}(k)$ such that
\begin{equation}
    \begin{aligned}
&\Gamma_{i,2,l_{i}}^{T}U_{i}^{-1}(k+1) \Gamma_{i,2,l_{i}}-\operatorname{diag}\{1,0,0,0,0,0\}\\&-\tau_{i, 6}(k) \operatorname{diag}\{-1, I, 0,0,0,0\} \\
&-\tau_{i, 7}(k) \operatorname{diag}\{-1,0, Q_{i}^{-1}(k), 0,0,0\} \\
&-\tau_{i, 8}(k)\operatorname{diag}\{-\hat{x}_{i}^{T}(k \mid k) E_{i,1}^{T} E_{i,1}\hat{x}_{i}(k \mid k), 0,0, I, 0,0\} \\
&-\tau_{i, 9}(k)\operatorname{diag}\{0,-\Xi_{i}^{T}(k \mid k) E_{i,1}^{T} E_{i,1} \Xi_{i}(k \mid k),0,0,I,0\}\\
&-\tau_{i, 10}(k)\operatorname{diag}\{0,0,-E_{i,2}^{T}E_{i,2},0,0,I\}\leq 0.
\end{aligned}\label{eq73}
\end{equation}
Inequality \eqref{eq73} can be written in the following compact form
\begin{equation}
    \begin{aligned}
\Gamma_{i,2,l_{i}}^{T}&U_{i}^{-1}(k+1)\Gamma_{i,2,l_{i}}-\operatorname{diag}\{1-\tau_{i,6}(k)-\tau_{i,7}(k)\\&-\tau_{i,8}(k)\hat{x}_{i}^{T}(k \mid k) E_{i,1}^{T} E_{i,1}\hat{x}_{i}(k \mid k),\tau_{i, 6}(k)I\\&-\tau_{i, 9}(k)\Xi_{i}^{T}(k \mid k) E_{i,1}^{T} E_{i,1} \Xi_{i}(k \mid k),\tau_{i, 7}(k)Q_{i}^{-1}(k)\\&-\tau_{i, 10}(k)E{i,2}^{T}E_{i,2},\tau_{i, 8}(k)I,\tau_{i, 9}(k)I,\\&\tau_{i, 10}(k)I\}\leq 0.
\end{aligned}\label{eq74}
\end{equation}
Finally, denoting
\begin{equation}
\small{
    \begin{aligned}
\Theta_{i,2}&(k)\\
=&\operatorname{diag}\{ 1-\tau_{i, 6}(k)-\tau_{i, 7}(k)\\&-\tau_{i, 8}(k)\hat{x}_{i}^{T}(k \mid k) E_{i,1}^{T} E_{i,1}\hat{x}_{i}(k \mid k),\tau_{i, 6}(k)I\\&-\tau_{i, 9}(k)\Xi_{i}^{T}(k \mid k) E_{i,1}^{T} E_{i,1} \Xi_{i}(k \mid k),\tau_{i, 7}(k)Q_{i}^{-1}(k)\\&-\tau_{i, 10}(k)E_{i,2}^{T}E_{i,2},\tau_{i, 8}(k)I,\tau_{i, 9}(k)I,\tau_{i, 10}(k)I\},
\end{aligned}
}\label{eq75}
\end{equation}
we can write \eqref{eq74} as
\begin{equation}
    \begin{aligned}
&\Gamma_{i,2,l_{i}}^{T}U_{i}^{-1}(k+1) \Gamma_{i,2,l_{i}}-\Theta_{i,2}(k)\leq 0.
\end{aligned}\label{eq76}
\end{equation}
By using Schur complements, \eqref{eq76} is equivalent to
\begin{equation}
    \left[\begin{array}{cc}
-U_{i}(k+1) & \Gamma_{i,2,l_{i}} \\
\Gamma_{i,2,l_{i}}^{\mathrm{T}} &-\Theta_{i,2}(k)
\end{array}\right] \leq 0.\label{eq77}
\end{equation}
\begin{theorem}
Consider the leader following multi-agent system \eqref{eq1}, \eqref{eq11} that satisfies Assumption \ref{assump1}, Assumption \ref{assump2} and Assumption \ref{assump3}. Suppose that the state $x_{i}(k)$ belongs to its state estimation ellipsoid $\left(x_{i}(k)-\hat{x}_{i}(k \mid k)\right)^{\mathrm{T}} P_{i}^{-1}(k \mid k)\left(x_{i}(k)-\hat{x}_{i}(k \mid k)\right) \leq 1$ and leader state ellipsoid $\left(x_{i}(k)-x^{l}(k)\right)^{\mathrm{T}} U_{i}^{-1}(k)\left(x_{i}(k)-x^{l}(k)\right) \leq 1$, then the one-step ahead state $x_{i}(k+1)$ will reside in its state prediction ellipsoid $\left(x_{i}(k+1)-\hat{x}_{i}(k+1 \mid k)\right)^{\mathrm{T}} P_{i}^{-1}(k+1 \mid k)\left(x_{i}(k+1)-\hat{x}_{i}(k+1 \mid k)\right) \leq 1$ as well as leader state ellipsoid $\left(x_{i}(k+1)-x^{l}(k+1)\right)^{\mathrm{T}} U_{i}^{-1}(k+1)\left(x_{i}(k+1)-x^{l}(k+1)\right) \leq 1$, if there exist $P_{i}(k+1 \mid k)>0, U_{i}(k+1)>0, \hat{A}_{l_{i}}, K_{l{i}}, \tau_{i,m}(k) \geq 0,$ for $m=1,\ldots,10,$ such that the linear matrix inequalities (LMI) \eqref{eq38} and \eqref{eq77} hold for all $l_{i},j_{i}=1,\dots,r$.\label{theo1}
\end{theorem}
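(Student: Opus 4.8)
The plan is to establish the two required one-step containments separately, but via the same four-stage template: first parametrize the current state through its ellipsoid membership, then propagate it through the closed-loop dynamics to express the one-step error as a linear image of a single augmented uncertainty vector, then recast both the target containment and all the a priori bounds as quadratic inequalities in that vector and invoke the S-procedure, and finally linearize the resulting matrix inequality by a Schur complement. Concretely, using the hypothesis that $x_i(k)$ lies in its state ellipsoid, I would write $x_i(k)=\hat{x}_i(k\mid k)+\Xi_i(k\mid k)z_i$ with $\|z_i\|\le 1$ as in \eqref{eq15}, substitute into the fuzzy model \eqref{eq10}, the consensus protocol \eqref{eq22}, and the prediction filter \eqref{eq14}, and collect the interpolation-error terms $q_{i,1},q_{i,2},q_{i,3}$ of \eqref{eq25}; this produces the prediction error in the compact bilinear form \eqref{eq28}, a double fuzzy sum of $\Gamma_{i,1,l_i j_i}\eta_{i,1}(k)$, so that the containment $x_i(k+1)\in\mathcal{X}_i(k+1\mid k)$ becomes the single quadratic inequality \eqref{eq30} in $\eta_{i,1}(k)$.

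The first genuinely non-routine step is that the quadratic form generated by \eqref{eq28} is a \emph{quadruple} fuzzy sum over $l_i,j_i,m_i,n_i$. The key observation is that, since the normalized weights satisfy $g_{l_i}\ge 0$ and $\sum_{l_i} g_{l_i}=1$, the map $\Gamma\mapsto \eta^{\mathrm T}\Gamma^{\mathrm T}P_i^{-1}\Gamma\eta$ is convex, so Jensen's inequality collapses the quadruple sum onto the diagonal double sum, which is precisely \eqref{eq102}. This reduction is what makes it sufficient to enforce one inequality per index pair $(l_i,j_i)$ instead of one per quadruple, and it is the justification for the finite LMI family appearing in the statement.

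The second key step is the S-procedure. The bounds $\|z_i\|\le1$, $w_i^{\mathrm T}Q_i^{-1}w_i\le1$, and the three uncertainty constraints coming from $\|\Delta_i\|\le 1$ in \eqref{eq31}, all rewritten in $\eta_{i,1}(k)$ as the diagonal quadratic forms \eqref{eq33}, are combined with the target \eqref{eq30} through the S-procedure \cite{b28}; this yields nonnegative multipliers $\tau_{i,1},\dots,\tau_{i,5}$ and the matrix inequality \eqref{eq34}, which after grouping the diagonal blocks into $\Theta_{i,1}(k)$ of \eqref{eq36} becomes \eqref{eq37}, and a Schur complement on $P_i^{-1}(k+1\mid k)$ delivers the LMI \eqref{eq38}. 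Repeating the identical argument for the leader error $x_i(k+1)-x^{l}(k+1)$, now starting from the parametrization \eqref{add17}, the compact form \eqref{eq67}, the multipliers $\tau_{i,6},\dots,\tau_{i,10}$, and the block $\Theta_{i,2}(k)$ of \eqref{eq75}, produces the second LMI \eqref{eq77}. Feasibility of \eqref{eq38} and \eqref{eq77} is therefore exactly the sufficient condition guaranteeing both the prediction-ellipsoid and the leader-ellipsoid containments, which completes the proof.

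The hard part will be bookkeeping rather than any deep inequality: I must assemble the augmented regressor $\eta_{i,1}(k)$ so that every uncertain quantity — the filter mismatch $(A_{l_i}-\hat{A}_{j_i})\hat{x}_i$, the feedback term, the process noise, and the three interpolation errors — enters linearly through a single $\Gamma_{i,1,l_i j_i}$, and then verify that the five S-procedure multipliers can absorb all constraints simultaneously while keeping the diagonal blocks of $\Theta_{i,1}$ consistent, in particular the sign-indefinite blocks $\tau_{i,1}I-\tau_{i,4}\Xi_i^{\mathrm T}E_{i,1}^{\mathrm T}E_{i,1}\Xi_i$ and $\tau_{i,2}Q_i^{-1}-\tau_{i,5}E_{i,2}^{\mathrm T}E_{i,2}$. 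Ensuring these groupings are exact is what ties the whole chain from \eqref{eq24} down to \eqref{eq38} and \eqref{eq77} together.
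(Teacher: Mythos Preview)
Your proposal is correct and follows essentially the same route as the paper: the parametrizations \eqref{eq15} and \eqref{add17}, the compact error forms \eqref{eq28} and \eqref{eq67}, the convexity/Jensen reduction from the quadruple fuzzy sum to the diagonal double sum in \eqref{eq102} and \eqref{eq103}, the S-procedure with the multipliers $\tau_{i,1},\dots,\tau_{i,10}$ yielding \eqref{eq37} and \eqref{eq76}, and the final Schur complements giving \eqref{eq38} and \eqref{eq77} are exactly the steps the paper carries out in the discussion preceding the theorem and then invokes in its short proof.
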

\begin{proof}
According to the above discussion, if there exist $P_{i}(k+1 \mid k)>0, U_{i}(k+1)>0, \hat{A}_{l_{i}}, K_{l{i}}, \tau_{i,m}(k) \geq 0,$ for $m=1,\ldots,10,$ such that \eqref{eq38} and \eqref{eq77} hold for all $l_{i},j_{i}=1,\dots,r$, then we have
\begin{equation}
\begin{aligned}
    \sum_{l_{i}=1}^{r} &g_{l_{i}}\left(\theta_{i}(k)\right) \sum_{j_{i}=1}^{r} g_{j_{i}}\left(\hat{\theta}_{i}(k)\right)\\&\times\eta_{i,1}^{T}(k)\Gamma_{i,1,l_{i} j_{i}}^{T}P_{i}^{-1}(k+1 \mid k)\Gamma_{i,1,l_{i} j_{i}}\eta_{i,1}(k) \leq 1
    \end{aligned}\label{eq100}
\end{equation}
and
\begin{equation}
\begin{aligned}
    \sum_{l_{i}=1}^{r} g_{l_{i}}\left(\theta_{i}(k)\right)\eta_{i,1}^{T}(k)\Gamma_{i,2,l_{i}}^{T}U_{i}^{-1}(k+1)\Gamma_{i,2,l_{i}}\eta_{i,1}(k) \leq 1.
    \end{aligned}\label{eq101}
\end{equation}
From \eqref{eq102} and \eqref{eq103}, we obtain
\begin{equation}
\resizebox{0.97\hsize}{!}{$
\begin{aligned}
    \left(x_{i}(k+1)-\hat{x}_{i}(k+1 \mid k)\right)^{\mathrm{T}} P_{i}^{-1}(k+1 \mid k)\left(x_{i}(k+1)-\hat{x}_{i}(k+1 \mid k)\right) \leq 1
    \end{aligned}
    $}
\end{equation}
and
\begin{equation}
    \left(x_{i}(k+1)-x^{l}(k+1)\right)^{\mathrm{T}} U_{i}^{-1}(k+1)\left(x_{i}(k+1)-x^{l}(k+1)\right) \leq 1,
\end{equation}
which complete the proof.
\end{proof}
According to the Theorem \ref{theo1} and in order to find the optimal state prediction ellipsoid containing $x_{i}(k+1)$, the convex optimization is performed as
\begin{equation}
\begin{aligned}
\underset{\begin{subarray}{c}
  P_{i}(k+1 \mid k), U_{i}(k+1), \hat{A}_{l{i}}(k), K_{l{i}},\\
  \tau_{i,1}(k), \tau_{i, 2}(k), \tau_{i, 3}(k), \tau_{i, 4}(k), \tau_{i, 5}(k),\\
  \tau_{i,6}(k), \tau_{i, 7}(k), \tau_{i, 8}(k), \tau_{i, 9}(k), \tau_{i, 10}(k)\\
  \end{subarray}}{\operatorname{min}}\operatorname{Tr}\left(T_{i}(k+1 \mid k)\right)
\end{aligned}\label{eq39}
\end{equation}
subject to \eqref{eq38} for all $l_{i},j_{i}=1,\dots,r$ in which the trace of $T_{i}(k+1 \mid k)=\operatorname{diag}\{U_{i}(k+1),P_{i}(k+1 \mid k)\}$ is optimized at each time step in order to find the prediction ellipsoid set with minimal size.
\subsection{Update on Prediction Ellipsoid Set With Current Measurement}
We develop here a scheme to determine the shape matrix $E_{i}(k+1 \mid k+1)$ and the filter gain $L_{i}(k+1)$ with the output constraint \eqref{eq21}.

From the system \eqref{eq1}, the prediction ellipsoid set \eqref{eq19}, and the filter based on the current measurement \eqref{eq19}, the current estimation error $x_{i}(k+1)-\hat{x}_{i}(k+1 \mid k+1)$ can be written as\\
\begin{equation}
\small{
    \begin{aligned}
x_{i}(k+1)&-\hat{x}_{i}(k+1\mid k+1)\\
=&\sum_{l_{i}=1}^{r} g_{l_{i}}\left(\theta_{i}(k)\right)\sum_{j_{i}=1}^{r} g_{j_{i}}\left(\hat{\theta}_{i}(k)\right)\\
&\times\left[\left(I-L_{j_i}C_{l{i}}\right) \Xi_{i}(k+1 \mid k)z_{i}\right.\\
&\quad\quad-D_{l_{i}}v_{i}(k+1)-L_{j_i}H_{i,3}q_{i,4}\\
&\quad\quad\left.-L_{j_i}H_{i,3}q_{i,5}-L_{j_i}H_{i,4}q_{i,6}\right],
\end{aligned}
}
\end{equation}
where
\begin{equation}
    \begin{aligned}
    &q_{i,4}=\Delta_{i,3} E_{i,3} \hat{x}_{i}(k+1 \mid k)\\
    &q_{i,5}=\Delta_{i,3} E_{i,3} \Xi_{i}(k+1 \mid k) z_{i}\\
    &q_{i,6}=\Delta_{i,4} E_{i,4}v_{i}(k+1).
    \end{aligned}\label{eq42}
\end{equation}
Therefore, we can define
\begin{equation}
    \eta_{i,2}(k+1)=\left[\begin{array}{llllll}
1
&z_{i}
&v_i(k+1)\
&q_{i,4}
&q_{i,5}
&q_{i,6}
\end{array}\right]^T.
\end{equation}
Thus, the above estimation error dynamics can be written in a compact form
\begin{equation}
\begin{aligned}
    x_{i}(k+1)&-\hat{x}_{i}(k+1 \mid k+1)\\
    =&\sum_{l_{i}=1}^{r} g_{l_{i}}\left(\theta_{i}(k)\right)\sum_{j_{i}=1}^{r} g_{j_{i}}\left(\hat{\theta}_{i}(k)\right)\\
    &\times\Gamma_{i,2,l_{i}j_{i}}\eta_{i,2}(k+1).
\end{aligned}\label{eq45}
\end{equation}
By denoting
\begin{equation}
\begin{aligned}
    P_{i,2,l_{i},j_{i}}=\left(I-L_{j_i}C_{l{i}}\right)\Xi_{i}(k+1 \mid k),
    \end{aligned}
\end{equation}
we have
\begin{equation}
\resizebox{0.96\hsize}{!}{$
\begin{aligned}
\Gamma_{i,3,l{i}j_{i}}=\left[0\quad P_{i,2,l_{i},j_{i}}\quad-L_{j_{i}}D_{l{i}}
\quad -L_{j_i}H_{i,3} \quad -L_{j_i}H_{i,3}\quad -L_{j_i}H_{i,4}\right].
\end{aligned}
$}
\end{equation}
Taking \eqref{eq45} into account, we can write
\begin{equation}
\small{
\begin{aligned}
(x_{i}(k+1)&-\hat{x}_{i}(k+1\mid k+1))^{\mathrm{T}} P_{i}^{-1}(k+1 \mid k+1)\\
        &\times\left(x_{i}(k+1)-\hat{x}_{i}(k+1 \mid k+1)\right)\\
        =&\sum_{l_{i}=1}^{r}g_{l_{i}}\left(\theta_{i}(k)\right)\sum_{j_{i}=1}^{r} g_{j_{i}}\left(\hat{\theta}_{i}(k)\right)\\
        &\sum_{m_{i}=1}^{r}g_{m_{i}}\left(\theta_{i}(k)\right)\sum_{n_{i}=1}^{r} g_{n_{i}}\left(\hat{\theta}_{i}(k)\right)\\
        &\eta_{i,2}^{T}(k+1)\Gamma_{i,3,l{i}j_{i}}^{T}\Gamma_{i,3,m{i}n_{i}}\eta_{i,2}(k+1)
        \end{aligned}
        }
\end{equation}
Therefore, we can achieve
\begin{equation}
\begin{aligned}
\left(x_{i}(k+1)\right.&\left.-\hat{x}_{i}(k+1\mid k+1)\right)^{\mathrm{T}} P_{i}^{-1}(k+1 \mid k+1)\\
        &\times\left(x_{i}(k+1)-\hat{x}_{i}(k+1 \mid k+1)\right)\\
        \leq&\sum_{l_{i}=1}^{r}g_{l_{i}}\left(\theta_{i}(k)\right)\sum_{j_{i}=1}^{r} g_{j_{i}}\left(\hat{\theta}_{i}(k)\right)\\
        &\eta_{i,2}^{T}(k+1)\Gamma_{i,3,l{i}j_{i}}^{T}\Gamma_{i,3,l{i}j_{i}}\eta_{i,2}(k+1)
        \end{aligned}\label{eq104}
\end{equation}
Therefore, the condition \eqref{eq20} in Section \ref{sub3-b} can be described as
\begin{equation}
\begin{aligned}
    \eta_{i,2}^{\mathrm{T}}(k+1)\Bigg[&\sum_{l_{i}=1}^{r} g_{l_{i}}\left(\theta_{i}(k)\right)\sum_{j_{i}=1}^{r} g_{j_{i}}\left(\hat{\theta}_{i}(k)\right)\Gamma_{i,3,l_{i}j_{i}}^{\mathrm{T}}\\&\times P_{i}^{-1}(k+1\mid k+1)\Gamma_{i,3,l_{i}j_{i}}\\
    &-\operatorname{diag}\{1,0,0,0,0,0\}\Bigg] \eta_{i,2}(k+1) \leq 0
    \end{aligned}\label{eq44}
\end{equation}
On the other hand, from \eqref{eq2}, \eqref{eq19}, and \eqref{eq42} the unknown variables $z_{i}, v_{i}(k+1), q_{i,4}, q_{i,5},$ and $q_{i,6}$ satisfy the following constraints:
\begin{equation}
    \begin{aligned}
&\|z_{i}\| \leq 1 \\
&v_{i}^{T}(k+1) R_{i}^{-1}(k+1) v_{i}(k+1) \leq 1 \\
&q_{i,4}^{T} q_{i,4}-\hat{x}_{i}^{T}(k+1 \mid k) E_{i,3}^{T} E_{i,3} \hat{x}_{i}(k+1 \mid k) \leq 0 \\
&q_{i,5}^{T} q_{i,5}-z_{i}^{T}\Xi_{i}^{T}(k+1 \mid k)\\
&\quad\quad\quad\quad\quad\quad\quad\quad\times E_{i,3}^{T} E_{i,3} \Xi_{i}(k+1 \mid k) z_{i} \leq 0 \\
&q_{i,6}^{T} q_{i,6}-v_{i}^{T}(k+1)E_{i,4}^{T} E_{i,4} v_{i}(k+1) \leq 0
\end{aligned}
\end{equation}
which can be written in $\eta_{i,2}(k+1)$ as
\begin{equation}
    \begin{aligned}
&\eta_{i,2}^{\mathrm{T}}(k+1) \operatorname{diag}\{-1, I,0,0,0,0\} \eta_{i,2}(k+1) \leq 0 \\
&\eta_{i,2}^{\mathrm{T}}(k+1)\operatorname{diag}\{-1, 0,R_{i}^{-1}(k+1),0,0,0\}\\
&\quad\quad\quad\quad\quad\quad\quad\quad\quad\times\eta_{i,2}(k+1) \leq 1 \\
&\eta_{i,2}^{\mathrm{T}}(k+1)\operatorname{diag}\{-\hat{x}_{i}^{T}(k+1 \mid k) E_{i,3}^{T} E_{i,3} \hat{x}_{i}(k+1 \mid k),\\
&\quad\quad\quad\quad\quad\quad\quad\quad\quad\quad0,0,I,0,0\}\eta_{i,2}(k+1) \leq 1 \\
&\eta_{i,2}^{\mathrm{T}}(k+1)\operatorname{diag}\{0,-\Xi_{i}^{T}(k+1 \mid k) E_{i,3}^{T} E_{i,3}\\
&\quad\quad\quad\quad\quad\quad\quad\quad\quad\times\Xi_{i}(k+1 \mid k),0,0,I,0\}\\
&\quad\quad\quad\quad\quad\quad\quad\quad\quad\times\eta_{i,2}(k+1) \leq 1\\
&\eta_{i,2}^{\mathrm{T}}(k+1)\operatorname{diag}\{0,0,-E_{i,4}^{T} E_{i,4},0,0,,I\}\\
&\quad\quad\quad\quad\quad\quad\quad\quad\quad\times\eta_{i,2}(k+1) \leq 1
\end{aligned}\label{eq51}
\end{equation}
By applying S-procedure to \eqref{eq44} and \eqref{eq51}, we can conclude that the inequality \eqref{eq44} holds if there exist nonnegative scalars $\tau_{i,11}(k), \tau_{i,12}(k), \tau_{i,13}(k), \tau_{i,14}(k),$ and $\tau_{i,15}(k)$ such that
\begin{equation}
    \begin{aligned}
&\Gamma_{i,3,l_{i}j_{i}}^{T}P_{i}^{-1}(k+1 \mid k+1) \Gamma_{i,3,l_{i}j_{i}}\\&-\operatorname{diag}\{1,0,0,0,0,0\}-\tau_{i, 11}(k) \operatorname{diag}\{-1, I,0,0,0,0\}\\
&-\tau_{i, 12}(k) \operatorname{diag}\{-1, 0,R_{i}^{-1}(k+1),0,0,0\}\\
&-\tau_{i, 13}(k)\operatorname{diag}\{-\hat{x}_{i}^{T}(k+1 \mid k) E_{i,3}^{T} E_{i,3} \hat{x}_{i}(k+1 \mid k),\\
&\quad\quad\quad\quad\quad\quad\quad\quad\quad0,0,I,0,0\}\\
&-\tau_{i, 14}(k)\operatorname{diag}\{0,-\Xi_{i}^{T}(k+1 \mid k) E_{i,3}^{T} E_{i,3}\\
&\quad\quad\quad\quad\quad\quad\quad\quad\quad\times\Xi_{i}(k+1 \mid k),0,0,I,0\}\\
&-\tau_{i, 15}(k) \operatorname{diag}\{0,0,-E_{i,4}^{T} E_{i,4},0,0,,I\}\leq 0.
\end{aligned}\label{eq47}
\end{equation}
Inequality \eqref{eq47} can be written in the following compact form
\begin{equation}
     \begin{aligned}
\Gamma_{i,3,l_{i}j_{i}}^{T}P&_{i}^{-1}(k+1 \mid k+1)\Gamma_{i,3,l_{i}j_{i}}-\operatorname{diag}\{1\\&-\tau_{i,11}(k)-\tau_{i, 12}(k)-\tau_{i, 13}(k)\\&\times \hat{x}_{i}^{T}(k+1 \mid k)E_{i,3}^{T} E_{i,3}\hat{x}_{i}(k+1 \mid k),\\&\tau_{i, 11}(k)I-\tau_{i, 14}(k)\Xi_{i}^{T}(k+1 \mid k) E_{i,3}^{T} E_{i,3}\\&\times \Xi_{i}(k+1 \mid k),\tau_{i, 12}(k)R_{i}^{-1}(k+1)\\&-\tau_{i, 15}(k)E{i,4}^{T}E_{i,4}, \tau_{i, 13}(k)I,\tau_{i, 14}(k)I,\\&\tau_{i, 15}(k)I\}\leq 0.
\end{aligned}\label{eq48}
\end{equation}
Finally denoting
\begin{equation}
\begin{aligned}
\Theta_{i,3}(k)\quad=&\operatorname{diag}\{1-\tau_{i,11}(k)-\tau_{i, 12}(k)-\tau_{i, 13}(k)\\&\times\hat{x}_{i}^{T}(k+1 \mid k) E_{i,3}^{T} E_{i,3}\hat{x}_{i}(k+1 \mid k),\\&\tau_{i, 11}(k)I-\tau_{i, 14}(k)\Xi_{i}^{T}(k+1 \mid k) E_{i,3}^{T} E_{i,3}\\&\times\Xi_{i}(k+1 \mid k),\tau_{i, 12}(k)R_{i}^{-1}(k+1)\\&-\tau_{i, 15}(k)E{i,4}^{T}E_{i,4},\tau_{i, 13}(k)I,\tau_{i, 14}(k)I,\\&\tau_{i, 15}(k)I\},
\end{aligned}\label{eq49}
\end{equation}
we can write \eqref{eq48} as
\begin{equation}
    \begin{aligned}
&\Gamma_{i,3,l_{i}j_{i}}^{T}P_{i}^{-1}(k+1 \mid k+1) \Gamma_{i,3,l_{i}j_{i}}-\Theta_{i,3}(k)\leq 0.
\end{aligned}\label{eq50}
\end{equation}
Now, we deal with the output constraint \eqref{eq21} in Section \ref{sub3-b}. First, it can be described by
\begin{equation}
    \Gamma_{l_{i}y_{i}}\left(\hat{x}_{i}(k+1 \mid k)\right) \eta_{i,2}(k+1)=0.\label{eq56}
\end{equation}
By denoting
\begin{equation}
\begin{aligned}
    &P_{l_{i}y_{i},1}=C_{l_{i}}\hat{x}_{i}(k+1 \mid k)-y_{i}(k+1)\\
    &P_{l_{i}y_{i},2}=C_{l_{i}}\Xi_{i}(k+1 \mid k),
    \end{aligned}
\end{equation}
we have
\begin{equation}
\begin{aligned}
    \Gamma_{l_{i}y_{i}}(\hat{x}_{i}(k+1 \mid k))=\left[P_{l_{i}y_{i},1}\quad P_{l_{i}y_{i},2}\quad D_{l_{i}}\quad H_{i,3} \quad H_{i,3} \quad H_{i,4}\right].
\end{aligned}
\end{equation}
By virtue of Finsler's lemma \cite{b29}, the inequality \eqref{eq44} under constraint \eqref{eq56} holds if there exists a $Z_i(k+1)$ such that
\begin{equation}
    \begin{aligned}
\Gamma_{i,3,l_{i}j_{i}}^{T}&P_{i}^{-1}(k+1 \mid k+1) \Gamma_{i,3,l_{i}j_{i}}-\Theta_{i,3}(k) \\
&+Z_{i}^{\mathrm{T}}(k+1) \Gamma_{l_{i}y_{i}}\left(\hat{x}_{i}(k+1 \mid k)\right)\\
&+\Gamma_{l_{i}y_{i}}^{\mathrm{T}}\left(\hat{x}_{i}(k+1 \mid k)\right) Z_{i}(k+1) \leq 0 .
\end{aligned}\label{eq53}
\end{equation}
For the purpose of simplicity, denote
\begin{equation}
\begin{aligned}
    \Theta_{i,4}(k)=&\Theta_{i,3}(k)-Z_{i}^{\mathrm{T}}(k+1) \Gamma_{l_{i}y_{i}}\left(\hat{x}_{i}(k+1 \mid k)\right)\\
    &-\Gamma_{l_{i}y_{i}}^{\mathrm{T}}\left(\hat{x}_{i}(k+1 \mid k)\right) Z_{i}(k+1).
    \end{aligned}
\end{equation}
Then, by using Schur complements, \eqref{eq53} is equivalent to
\begin{equation}
    \left[\begin{array}{cc}
-P_{i}(k+1 \mid k+1) & \Gamma_{i,3,l_{i}j_{i}} \\
\Gamma_{i,3,l_{i}j_{i}}^{\mathrm{T}} & -\Theta_{i,4}(k)
\end{array}\right] \leq 0.\label{eq55}
\end{equation}
\begin{theorem}
Consider the leader following multi-agent system \eqref{eq1}, \eqref{eq11} that satisfies Assumption \ref{assump1}, Assumption \ref{assump2} and Assumption \ref{assump3}. If the state $x_{i}(k+1)$ belongs to its state prediction ellipsoid $\left(x_{i}(k+1)-\hat{x}_{i}(k+1 \mid k)\right)^{\mathrm{T}} P_{i}^{-1}(k+1 \mid k)\left(x_{i}(k+1)-\hat{x}_{i}(k+1 \mid k)\right) \leq 1$, then such a state also resides in its updated state estimation ellipsoid $\left(x_{i}(k+1)-\hat{x}_{i}^{\mathrm{T}}(k+1 \mid k+1)\right) P_{i}^{-1}(k+1 \mid k+1)\left(x_{i}(k+1)-\hat{x}_{i}(k+1 \mid k+1)\right) \leq 1$ with the center determined by \eqref{eq17}, where $P_{i}(k+1 \mid k+1)>0$ satisfies matrix inequality \eqref{eq55} with other decision variables $L_{l_{i}}(k+1)$, $N_{i}(k+1),$ and $\tau_{i,m}(k) \geq 0$ for $m=11,\ldots,15$ for all $l_{i},j_{i}=1,\dots,r$.
\begin{proof}
According to the above discussion, if there exist $P_{i}(k+1 \mid k+1)>0, L_{l_{i}}(k+1)$, $N_{i}(k+1),$ and $\tau_{i,m}(k) \geq 0$ for $m=11,\ldots,15$ such that \eqref{eq55} holds for all $l_{i},j_{i}=1,\dots,r$, then we have
\begin{equation}
\begin{aligned}
    \sum_{l_{i}=1}^{r}&g_{l_{i}}\left(\theta_{i}(k)\right)\sum_{j_{i}=1}^{r} g_{j_{i}}\left(\hat{\theta}_{i}(k)\right)\\&\eta_{i,2}^{\mathrm{T}}(k+1)\Gamma_{i,3,l_{i}j_{i}}^{\mathrm{T}}P_{i}^{-1}(k+1\mid k+1)\Gamma_{i,3,l_{i}j_{i}}\eta_{i,2}(k+1) \leq 1.
    \end{aligned}
\end{equation}
From \eqref{eq104}, we obtain
\begin{equation}
\begin{aligned}
    \left(x_{i}(k+1)\right.&\left.-\hat{x}_{i}(k+1\mid k+1)\right)^{\mathrm{T}} P_{i}^{-1}(k+1 \mid k+1)\\
        &\times\left(x_{i}(k+1)-\hat{x}_{i}(k+1 \mid k+1)\right) \leq 1,
    \end{aligned}
\end{equation}
which completes the proof.
\end{proof}

Now, the convex optimization approach is applied to determine an optimal ellipsoid with the minimal size. Therefore, $P_{i}(k+1 \mid k+1)$ is obtained by solving the following optimization problem:
\begin{equation}
\begin{aligned}
    \underset{\begin{subarray}{c}
    P_{i}(k+1 \mid k+1), L_{l{i}}(k+1),\\
    \tau_{i,11}(k), \tau_{i,12}(k), \tau_{i,13}(k),\\
  \tau_{i,14}(k), \tau_{i,15}(k), Z_{i}(k+1)\\
  \end{subarray}}{\operatorname{min}}\operatorname{Tr}\left(P_{i}(k+1 \mid k+1)\right)\\
\end{aligned}\label{eq58}
\end{equation}
subject to \eqref{eq55}.
\label{theo2}
\end{theorem}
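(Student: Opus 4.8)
The plan is to reverse the chain of equivalences that precedes the statement, exactly as was done for the prediction step in Theorem~\ref{theo1}; the only genuinely new ingredient is the treatment of the measurement (output) constraint. First I would start from the compact representation \eqref{eq45} of the current estimation error $x_{i}(k+1)-\hat{x}_{i}(k+1\mid k+1)$ in terms of the augmented vector $\eta_{i,2}(k+1)$. Forming the quadratic form $(x_{i}(k+1)-\hat{x}_{i}(k+1\mid k+1))^{\mathrm{T}}P_{i}^{-1}(k+1\mid k+1)(x_{i}(k+1)-\hat{x}_{i}(k+1\mid k+1))$ produces a quadruple sum over the membership weights; using $g_{l_{i}},g_{j_{i}}\ge 0$, $\sum_{l_{i}}g_{l_{i}}=1$, and the Jensen-type convexity bound, this quadruple sum is dominated by the double sum in \eqref{eq104}.

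Next I would observe that the desired containment \eqref{eq20} of $x_{i}(k+1)$ in the updated ellipsoid $\mathcal{X}_{i}(k+1\mid k+1)$ is equivalent to the inequality \eqref{eq44}, i.e.\ a quadratic form in $\eta_{i,2}(k+1)$ being nonpositive, subject to the constraints that the entries of $\eta_{i,2}(k+1)$ must satisfy. These come in two kinds. The inequality constraints collected in \eqref{eq51} follow from $\|z_{i}\|\le 1$, the UBB measurement-noise bound in Assumption~\ref{assump1}, and the three norm bounds on $q_{i,4},q_{i,5},q_{i,6}$ implied by $\|\Delta_{i}\|\le 1$ in Assumption~\ref{assump2} together with definitions \eqref{eq42}. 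I would merge all of these into the principal inequality by the S-procedure \cite{b28}, introducing the nonnegative multipliers $\tau_{i,11}(k),\dots,\tau_{i,15}(k)$; collecting the diagonal blocks yields \eqref{eq47} and, equivalently, the compact form \eqref{eq48} with $\Theta_{i,3}(k)$ from \eqref{eq49}, that is \eqref{eq50}.

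The remaining constraint is the equality \eqref{eq56}, which encodes that the current measurement $y_{i}(k+1)$ is a known quantity linearly restricting $\eta_{i,2}(k+1)$; this is precisely what distinguishes the update step from the prediction step and is the main obstacle, since a pure S-procedure cannot absorb an equality cheaply. The clean way to handle it is Finsler's lemma \cite{b29}: the quadratic inequality \eqref{eq50} holds for every $\eta_{i,2}(k+1)$ annihilated by $\Gamma_{l_{i}y_{i}}(\hat{x}_{i}(k+1\mid k))$ provided there exists a multiplier matrix $Z_{i}(k+1)$ rendering \eqref{eq53} valid; absorbing the symmetric correction terms into $\Theta_{i,4}(k)$ and applying a Schur complement to the block $\Gamma_{i,3,l_{i}j_{i}}^{\mathrm{T}}P_{i}^{-1}(k+1\mid k+1)\Gamma_{i,3,l_{i}j_{i}}$ then delivers the LMI \eqref{eq55}.

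Finally I would read the chain in reverse. Feasibility of \eqref{eq55}, for all $l_{i},j_{i}=1,\dots,r$, is equivalent by Schur complement to \eqref{eq53}, hence by the sufficiency direction of Finsler's lemma it forces the S-procedure condition \eqref{eq47}, which in turn makes the quadratic form in \eqref{eq44} nonpositive for every admissible $\eta_{i,2}(k+1)$. Combining this with the bound \eqref{eq104} gives $(x_{i}(k+1)-\hat{x}_{i}(k+1\mid k+1))^{\mathrm{T}}P_{i}^{-1}(k+1\mid k+1)(x_{i}(k+1)-\hat{x}_{i}(k+1\mid k+1))\le 1$, i.e.\ $x_{i}(k+1)\in\mathcal{X}_{i}(k+1\mid k+1)$, which is the assertion. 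The only subtlety is that for the sufficiency needed here one uses the easy implication of Finsler's lemma, namely $\eta_{i,2}^{\mathrm{T}}\,\Theta\,\eta_{i,2}=\eta_{i,2}^{\mathrm{T}}(\Theta+Z_{i}^{\mathrm{T}}\Gamma_{l_{i}y_{i}}+\Gamma_{l_{i}y_{i}}^{\mathrm{T}}Z_{i})\eta_{i,2}$ on the constraint set, so no rank or definiteness hypothesis on the annihilated block is required.
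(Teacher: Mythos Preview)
Your proposal is correct and follows essentially the same approach as the paper: the paper's own proof is just the two-line observation that feasibility of the LMI \eqref{eq55} for all $l_{i},j_{i}$ implies, via the preceding Schur complement, Finsler, and S-procedure chain, the weighted-sum bound, which combined with \eqref{eq104} yields the ellipsoid containment. You have simply spelled out that chain and read it in reverse, including the correct identification of Finsler's lemma as the device absorbing the output equality constraint \eqref{eq56}.
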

\subsection{Recursive Algorithm for Attack Diagnosis}
The recursive algorithm based on the set-membership filtering to compute the state ellipsoids so that a cyberattack can be detected is summarized below.

Algorithm 1 recursively computes the prediction ellipsoid $\mathcal{X}_{i}(k+1 \mid k)$ and its update $\mathcal{X}_{i}(k+1 \mid k+1)$ with the current measurement $y_{i}(k+1)$. Steps 3 and 6 of the algorithm are proposed to detect cyberattacks that affect control signals, communication channels and sensor measurements.
\begin{table}[t]
\begin{tabular}{p{0.94\columnwidth}l}
\toprule\textbf{Algorithm 1 Recursive State Estimation} \\
\midrule \textbf{1. Initialization:} \\
\midrule Given an initial ellipsoid $\mathcal{X}_{i}(0 \mid 0), \mathcal{U}_{i}(0)$, recursive times $T_{N}$, and set $k=0$. Let $x_{i}(k)=x_{i}(0), \hat{x}_{i}(k \mid k)=\hat{x}_{i}(0 \mid 0), \Xi_{i}(k \mid k)=\Xi_{i}(0 \mid 0), x^{l}(k)=x^{l}(0),$ and $\xi_{i}(k)=\xi_{i}(0)$. \\
\midrule \textbf{2. Prediction:} \\
\midrule 1) Calculate $P_{i}(k+1 \mid k), U_{i}(k+1), \hat{A}_{l_{i}}(k), K_{l_{i}}$ by solving the optimization problem \eqref{eq39}. \\
2) Obtain the matrix $\Xi_{i}(k+1 \mid k),$ and $\xi_{i}(k+1 \mid k)$ according to $P_{i}(k+1 \mid k)=\Xi_{i}(k+1 \mid k)\Xi_{i}^\mathrm{T}(k+1 \mid k),$ and $U_{i}(k+1)=\xi_{i}(k+1)\xi_{i}^\mathrm{T}(k+1)$. \\
3) Calculate the centre of the prediction ellipsoid $\hat{x}_{i}(k+1 \mid k)$ by \eqref{eq13}. \\
\midrule \textbf{3. Attack Detection:} Control Signal Data or Communication Signal Data Cyber Attack Diagnosis \\
\midrule 1) If $\mathcal{X}_{i}(k \mid k) \bigcap \mathcal{X}_{i}(k+1 \mid k) \neq \varnothing$, there is no attack and go to step 5. \\
2) If $\mathcal{X}_{i}(k \mid k) \cap \mathcal{X}_{i}(k+1 \mid k)=\varnothing$, data is subject to attack and go to step 4. \\
\midrule \textbf{4. Recovery Step and Attack Mitigation:}\\
\midrule Set $\mathcal{X}_{i}(k+1 \mid k) \leftarrow \mathcal{X}_{i}(k \mid k),$ $\mathcal{U}_{i}(k+1) \leftarrow \mathcal{U}_{i}(k), u_{i}(k) \leftarrow u_{i}(k-1),$
and go to step 5.\\
\midrule \textbf{5. Measurement Update:} \\
\midrule 1) Calculate $P_{i}(k+1 \mid k+1)$ and $L_{l_{i}}(k+1)$ by solving the optimization problem \eqref{eq58}.\\
2) Obtain the new $\Xi_{i}(k+1 \mid k+1)$ according to $P_{i}(k+1 \mid k+1)=\Xi_{i}(k+1 \mid k+1)\Xi_{i}^\mathrm{T}(k+1 \mid k+1)$.\\
3) Calculate the centre of the updated estimation ellipsoid $\hat{x}_{i}(k+1 \mid k+1)$ by \eqref{eq17}.\\
\midrule \textbf{6. Attack Detection:} Sensor Measurement Data Cyber Attack Diagnosis\\
\midrule 1) If $\mathcal{X}_{i}(k+1 \mid k+1) \bigcap {\mathcal{X}_{i}(k+1 \mid k)} \neq \varnothing$, there is no attack and go to step 8. \\
2) If $\mathcal{X}_{i}(k+1 \mid k+1) \bigcap\mathcal{X}_{i}(k+1 \mid k)=\varnothing$, data is subject to attack and go to step 7.\\
\midrule \textbf{7. Recovery Step and Attack Mitigation:}\\
\midrule Set $\mathcal{X}_{i}(k+1 \mid k+1) \leftarrow \mathcal{X}_{i}(k+1 \mid k), y_i(k+1) \leftarrow y_i(k)$ and go to step 8. \\
\midrule \textbf{8. Loop} \\
\midrule If $k==T_{N}$ then Exit, Else $k \leftarrow k+1$ and go to step 2.\\
\bottomrule
\end{tabular}
\end{table}
\section{Simulation Results}
Consider the following multi-agent, discrete-time, nonlinear system:
\begin{equation}
    \begin{aligned}
        x_{1,1}(k+1) =&0.2x_{1,1}(k)-0.3\left(x_{1,2}(k)-x_{1,1}^{2}(k)\right)\\
        &+u_{1}(k)+w_{k} \\
        x_{1,2}(k+1) =&0.3 x_{1,1}(k)+0.2\left(x_{1,2}(k)-x_{1,1}^{2}(k)\right)\\
        &+0.3u_{1,1}(k)+0.9u_{1,2}(k)+\omega_{k} \\
        y_{k} =&x_{1,1}(k)+0.1 x_{1,1}^{2}(k)+x_{1,2}(k)+v_{k}\\
        x_{2,1}(k+1) =&0.5x_{2,1}(k)-0.1\left(x_{2,2}(k)-x_{2,1}^{2}(k)\right)\\
        &+0.9u_{2,1}(k)+0.2u_{2,2}(k)+w_{k} \\
        x_{2,2}(k+1) =&0.9x_{2,1}(k)+0.5\left(x_{2,2}(k)-x_{2,1}^{2}(k)\right)\\
        &+u_{2}(k)+\omega_{k} \\
        y_{k} =&x_{2,1}(k)+0.1 x_{2,1}^{2}(k)+x_{2,2}(k)+v_{k}
    \end{aligned}
\end{equation}
where the state $x_{i}(k)=\left[\begin{array}{ll}x_{i,1}(k) & x_{i,2}(k)\end{array}\right]^{T}$.
Now, we construct the following fuzzy models to approximate the above nonlinear multi-agent system for each agent:\\
\textbf{Agent 1:}\\
- Rule 1: IF $x_{1, 1}(k)$ is about 1,THEN
\begin{equation}
\begin{aligned}
    x_{1}(k+1)=A_{1,1} x_{1}(k)+B_{1,1} u_{1}(k)+M_{1,1}\omega_{1}(k),\\
    y_{1}(k)=C_{1,1}x_{1}(k)+D_{1,1}v_{1}(k)
    \end{aligned}
\end{equation}
- Rule 2: IF $x_{1,1}(k)$ is about 0, THEN
\begin{equation}
\begin{aligned}
    x_{1}(k+1)=A_{1,2} x_{1}(k)+B_{1,2} \omega_{1}(k)\\
    y_{1}(k)=C_{1,2} x_{1}(k)+D_{1,2} v_{1}(k)
    \end{aligned}
\end{equation}
where
$$
\begin{array}{rlr}
A_{1,1} & =\left[\begin{array}{cc}
0.5 & -0.3 \\
0.1 & 0.2
\end{array}\right] \quad B_{1,1}=\left[\begin{array}{l}
1 \\
1
\end{array}\right]\\ C_{1,1}&=\left[\begin{array}{lll}
1.1 & 1.1
\end{array}\right] \quad D_{1,1}=1 \\
A_{1,2}&=\left[\begin{array}{cc}
0.2 & -0.3 \\
0.3 & 0.2
\end{array}\right] \quad B_{1,2}=\left[\begin{array}{l}
1 \\
1
\end{array}\right] \\ C_{1,2}&=\left[\begin{array}{lll}
1.0 & 1.0
\end{array}\right] \quad D_{1,2}=1.
\end{array}
$$
\textbf{Agent 2:}\\
- Rule 1: IF $x_{2, 1}(k)$ is about 1,THEN
\begin{equation}
\begin{aligned}
    x_{2}(k+1)=A_{2,1} x_{2}(k)+B_{2,1} u_{2}(k)+M_{2,1}\omega_{1}(k),\\
    y_{2}(k)=C_{2,1}x_{2}(k)+D_{2,1}v_{2}(k)
    \end{aligned}
\end{equation}
- Rule 2: IF $x_{2,1}(k)$ is about 0, THEN
\begin{equation}
\begin{aligned}
    x_{2}(k+1)=A_{2,2} x_{2}(k)+B_{2,2} \omega_{2}(k)\\
    y_{2}(k)=C_{2,2} x_{2}(k)+D_{2,2} v_{2}(k)
    \end{aligned}
\end{equation}
where
\begin{equation}
\begin{array}{rlr}
A_{2,1} & =\left[\begin{array}{cc}
0.6 & -0.1 \\
0.4 & 0.5
\end{array}\right] \quad B_{2,1}=\left[\begin{array}{l}
1 \\
1
\end{array}\right]\\ C_{2,1}&=\left[\begin{array}{lll}
1.1 & 1.1
\end{array}\right] \quad D_{2,1}=1 \\
A_{2,2}&=\left[\begin{array}{cc}
0.5 & -0.1 \\
0.9 & 0.5
\end{array}\right] \quad B_{2,2}=\left[\begin{array}{l}
1 \\
1
\end{array}\right] \\ C_{2,2}&=\left[\begin{array}{lll}
1.0 & 1.0
\end{array}\right] \quad D_{2,2}=1.
\end{array}
\end{equation}
For the convenience of simulation, triangular membership functions are used for Rule 1 and Rule 2 in this example.

In the above fuzzy models, the approximation errors between the nonlinear system and the fuzzy models are assumed to satisfy \eqref{eq12}, where
\begin{equation}
\begin{aligned}
\begin{array}{ll}
H_{1,1}=\left[\begin{array}{c}
0.1 \\
0.1
\end{array}\right] & H_{2,1}=\left[\begin{array}{c}
0.3 \\
0.3
\end{array}\right] \\ E_{1,1}=\left[\begin{array}{ll}
0 & 0.5
\end{array}\right] & E_{2,1}=\left[\begin{array}{ll}
0 & 0.6
\end{array}\right] \\ H_{i,2}=\left[\begin{array}{l}
0 \\
0
\end{array}\right] & E_{i,2}=0 \\
H_{i,3}=0.1 & E_{i,3}=\left[\begin{array}{ll}
0 & 0.5
\end{array}\right] \\ H_{i,4}=0 & E_{i,4}=0
\end{array}
\end{aligned}
\end{equation}
The leader matrices described in \eqref{ext} is defined as follow:
\begin{equation}
\begin{array}{ll}
A_{1}^{l}  =\left[\begin{array}{cc}
0.5 & 0.2 \\
-0.6 & 0.7
\end{array}\right]&
A_{2}^{l}  =\left[\begin{array}{cc}
0.5 & 0.2 \\
-0.4 & 0.7
\end{array}\right]
\end{array}
\end{equation}

In the simulation, $\omega_{i}(k)$ and $v_{i}(k)$ are chosen as $0.5 \sin (2 k)$ and $0.5 \sin (20k)$, respectively. The initial state is set as $x_{i}(0)=\left[\begin{array}{ll}0 & 0\end{array}\right]^{T}$, which belongs to the ellipsoids $\left(x_{i}(0\mid 0)-\hat{x}_{i}(0)\right)^{T} P_{i}^{-1}(0\mid 0)\left(x_{i}(0)-\hat{x}_{i}(0\mid 0)\right) \leq 1$ and $\left(x_{i}(0)-{x}^{l}(0)\right)^{T} U_{i}^{-1}(0)\left(x_{i}(0)-{x}_{l}(0)\right) \leq 1$, where $\hat{x}_{i}(0)={x}^{l}(0)=\left[\begin{array}{ll}1 & 1\end{array}\right]^{T}$, and $P_{i}(0\mid 0)=U_{i}(0)=\left[\begin{array}{cc}100 & 0 \\ 0 & 100\end{array}\right], Q_{i}(k)=1-k / 50$, and $R_{i}(k)=1-k / 50$.
The communication between the agents and the leader is modeled as Fig. \ref{fig7}
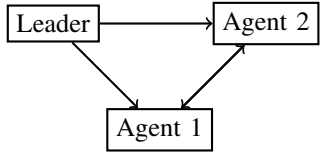
\begin{figure}
\begin{center}
\begin{tikzpicture}[-latex ,auto ,node distance =2 cm and 2 cm, thick , state/.style ={rectangle, draw, minimum width =1 cm}]
\node[state] (1) {Agent 1};
\node[state] (2) [above right of=1] {Agent 2};
\node[state] (3) [above left of=1] {Leader};
\draw[->] (1) -- (2);
\draw[->] (2) -- (1);
\draw[->] (3) -- (1);
\draw[->] (3) -- (2);
\end{tikzpicture}
\caption{Multi-agent system with a leader.}
\label{fig7}
\end{center}
\end{figure}

We obtained the simulation results under MATLAB 9.8 with YALMIP and SDPT3. We considered the following scenarios in 50 sampling steps.
\subsection{Attack Free System}
In this case, the prediction ellipsoid set and the updated estimation ellipsoid set must always have the intersection. Fig. \ref{fig1} (a) and Fig. \ref{fig1} (b)
show the existence of the intersection between these sets for agent 1 and agent 2, respectively.
\begin{figure}[thpb]
  \centering
  \begin{tabular}{ c @{\hspace{20pt}}}
    \includegraphics[width=\columnwidth]{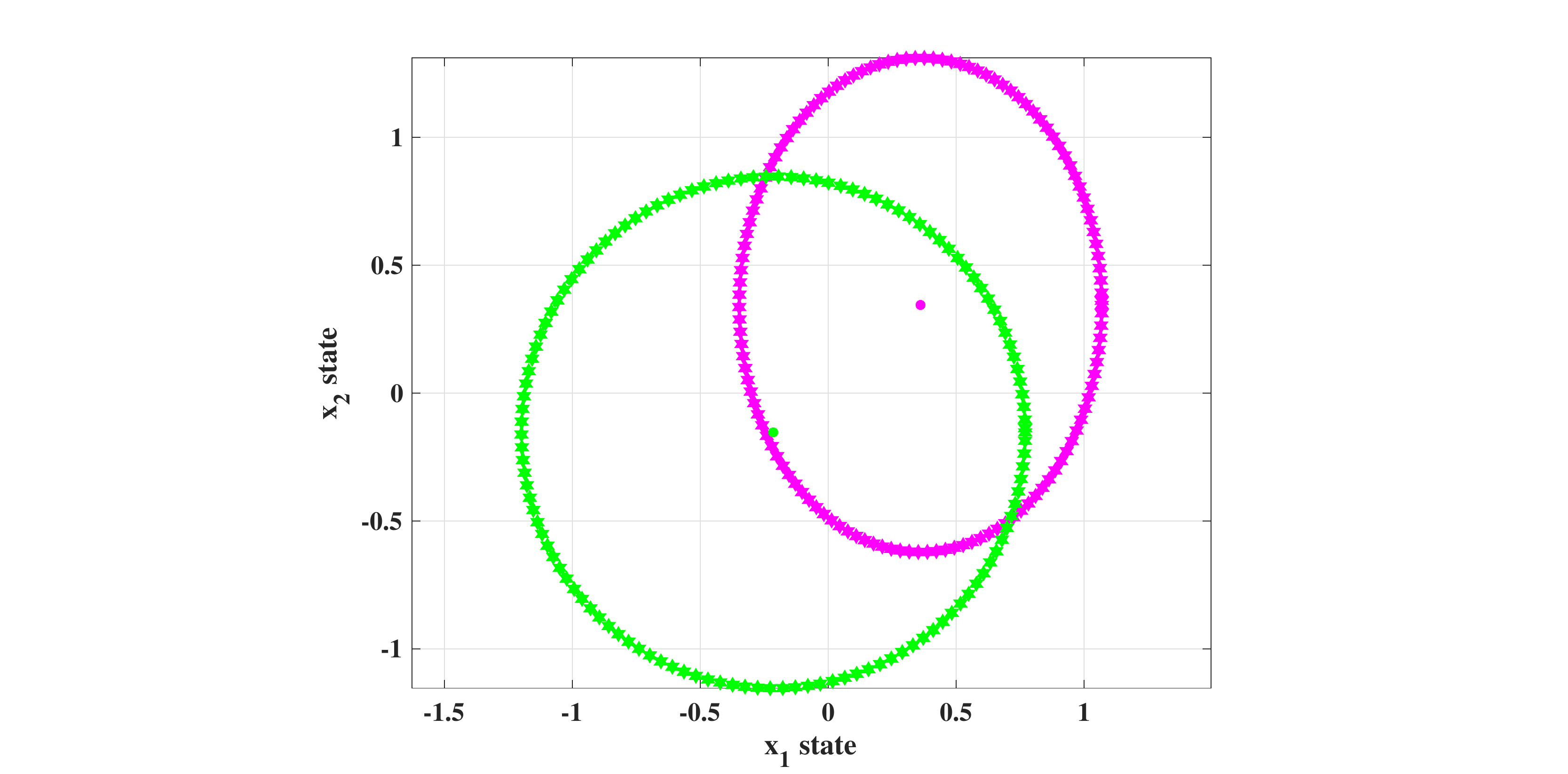} \\
    \small (a) Agent 1\\
      \includegraphics[width=\columnwidth]{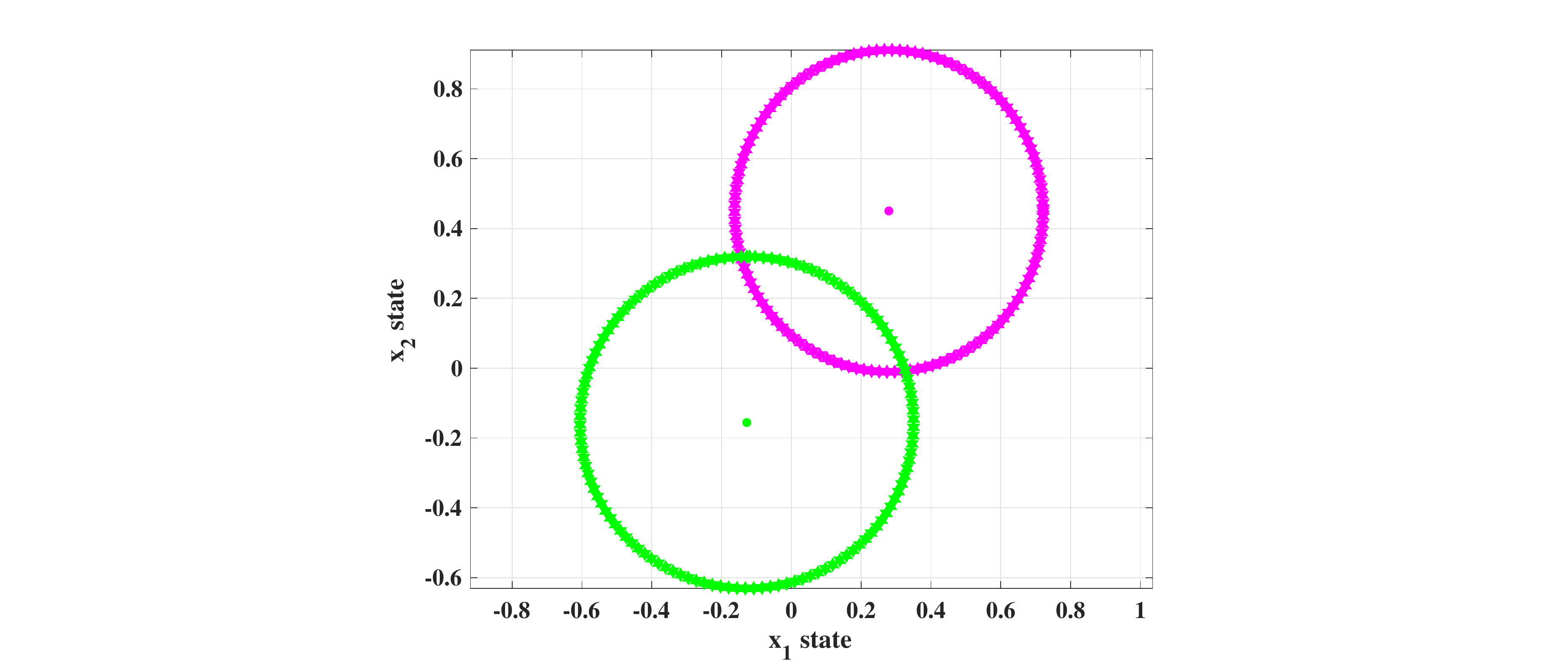} \\
      \small (b) Agent 2
  \end{tabular}
  \medskip

  \caption{Prediction ellipsoid $\mathcal{X}_{i}(k+1 \mid k)$ (pink) and the updated estimation ellipsoid $\mathcal{X}_{i}(k+1 \mid k+1)$ (green) at k=22.}
   \label{fig1}
\end{figure}

\subsection{Replay Attacks on Sensor Data}
We consider adding the replay attack on the sensor measurement data based on the definition in the Subsection \ref{sub2-a}. Therefore, we assume that the attack records the data from k=5 to k=10 and replaces the sensor data at k=20 to k=25 with them. Therefore, Fig. \ref{fig2} (a) and Fig. \ref{fig2} (b) confirm that the prediction ellipsoid set and the updated measurement set for the next iteration do not have the intersection during this attack period.
\begin{figure}[thpb]
  \centering
  \begin{tabular}{ c @{\hspace{20pt}}}
    \includegraphics[width=\columnwidth]{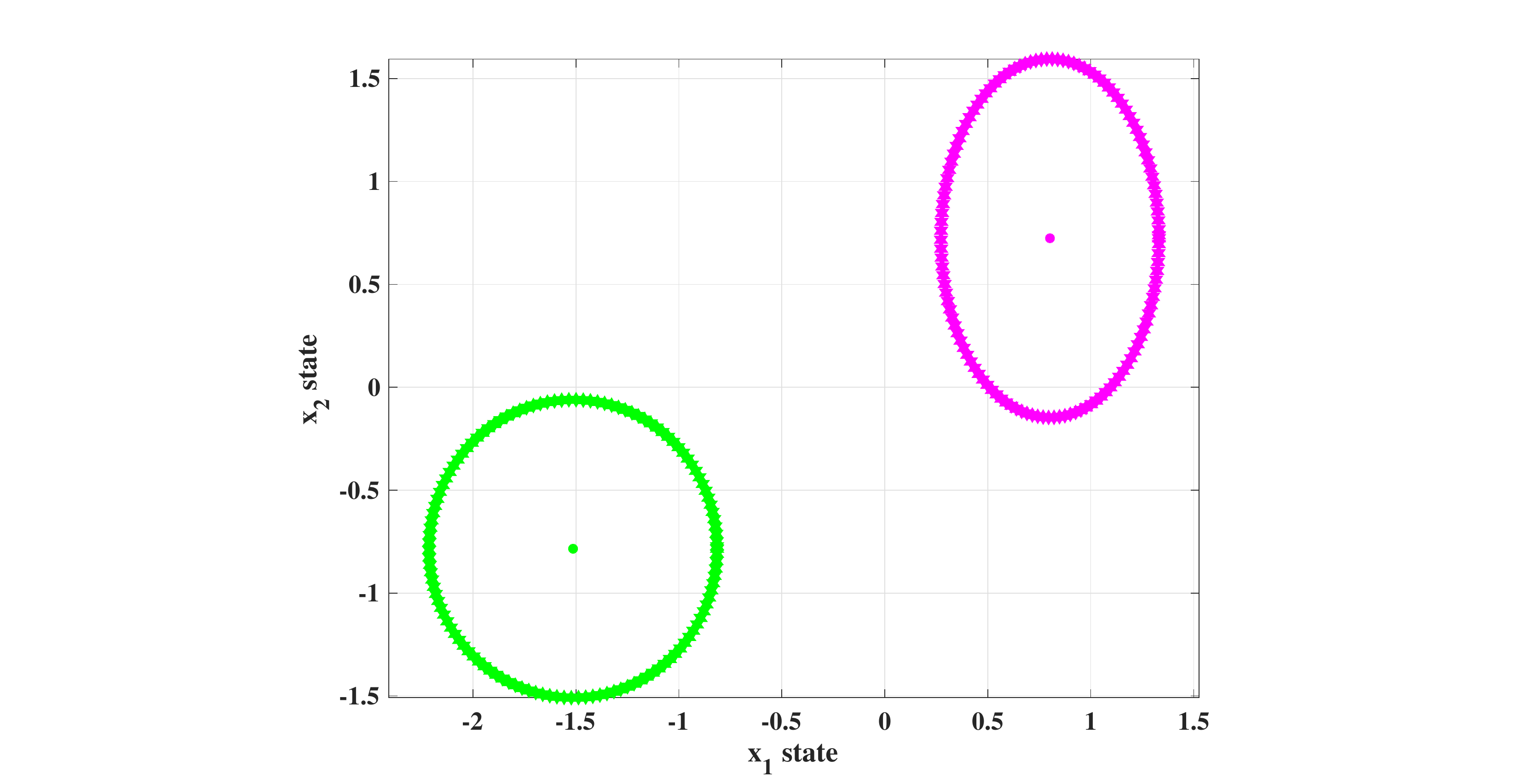} \\
    \small (a) Agent 1\\
      \includegraphics[width=\columnwidth]{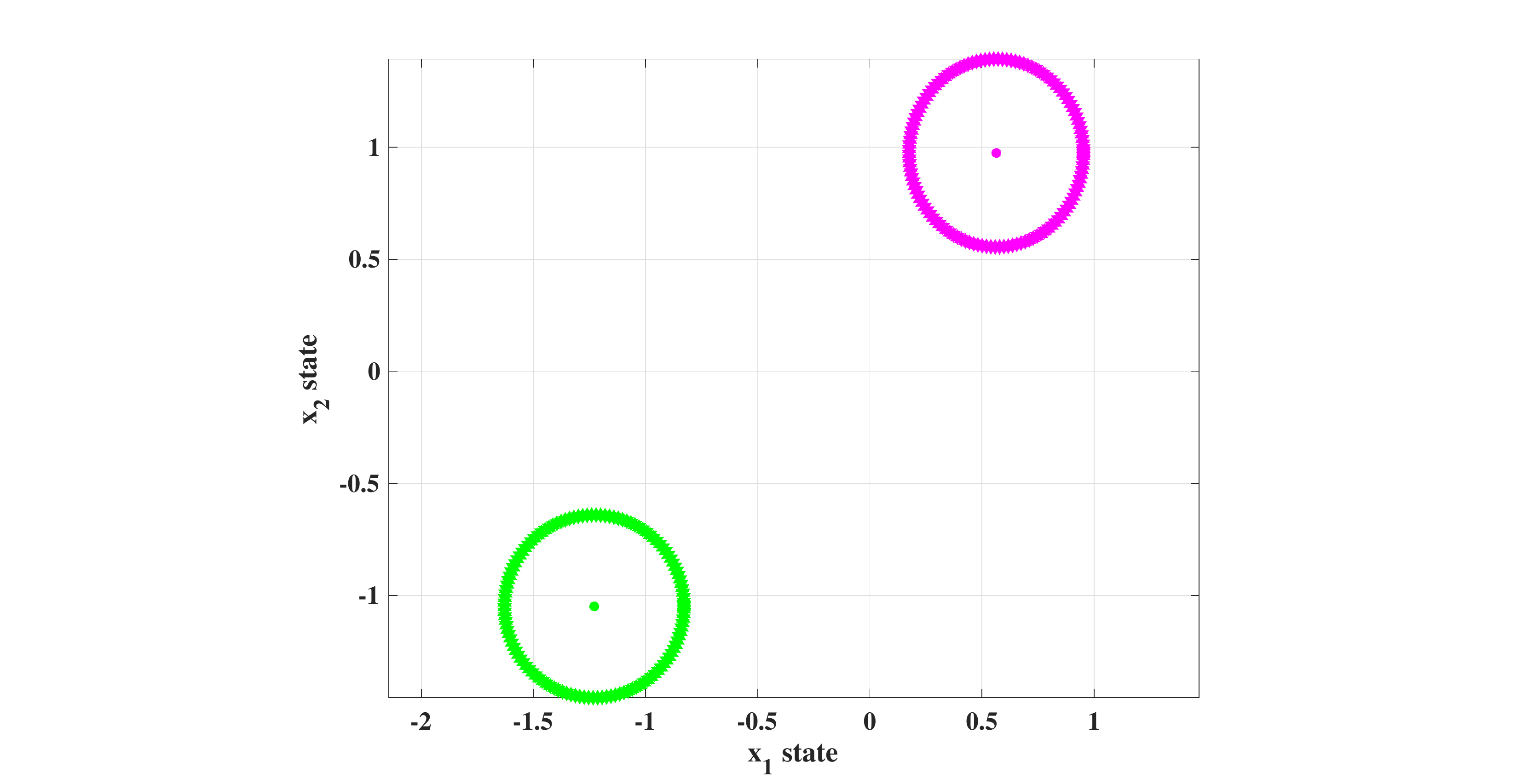} \\
      \small (b) Agent 2
  \end{tabular}
  \medskip

  \caption{Prediction ellipsoid $\mathcal{X}_{i}(k+1 \mid k)$ (pink) and the updated estimation ellipsoid $\mathcal{X}_{i}(k+1 \mid k+1)$ (green) at k=23.}
   \label{fig2}
\end{figure}
\subsection{False Data Injection Attacks on Control Signal}
In this case, the original data packets are replaced by false ones when they are transferred from controllers to actuators via communication channels. Therefore, we consider that the attacker replaces the control signal via targeting the communication channel between the controller and actuator. In the simulation, the attack vector in \eqref{eq3} is modeled as $u_i^a(k)= [4\quad4]^{T}$ for both agents from step k = 20 to k = 25. Therefore, as it is shown in Fig. \ref{fig3} (a) and Fig. \ref{fig3} (b), when the attack on the control signal occurs, the prediction ellipsoid set and the estimation ellipsoid set updated with the previous time instant has no intersection.
\begin{figure}[thpb]
  \centering
  \begin{tabular}{ c @{\hspace{20pt}}}
    \includegraphics[width=\columnwidth]{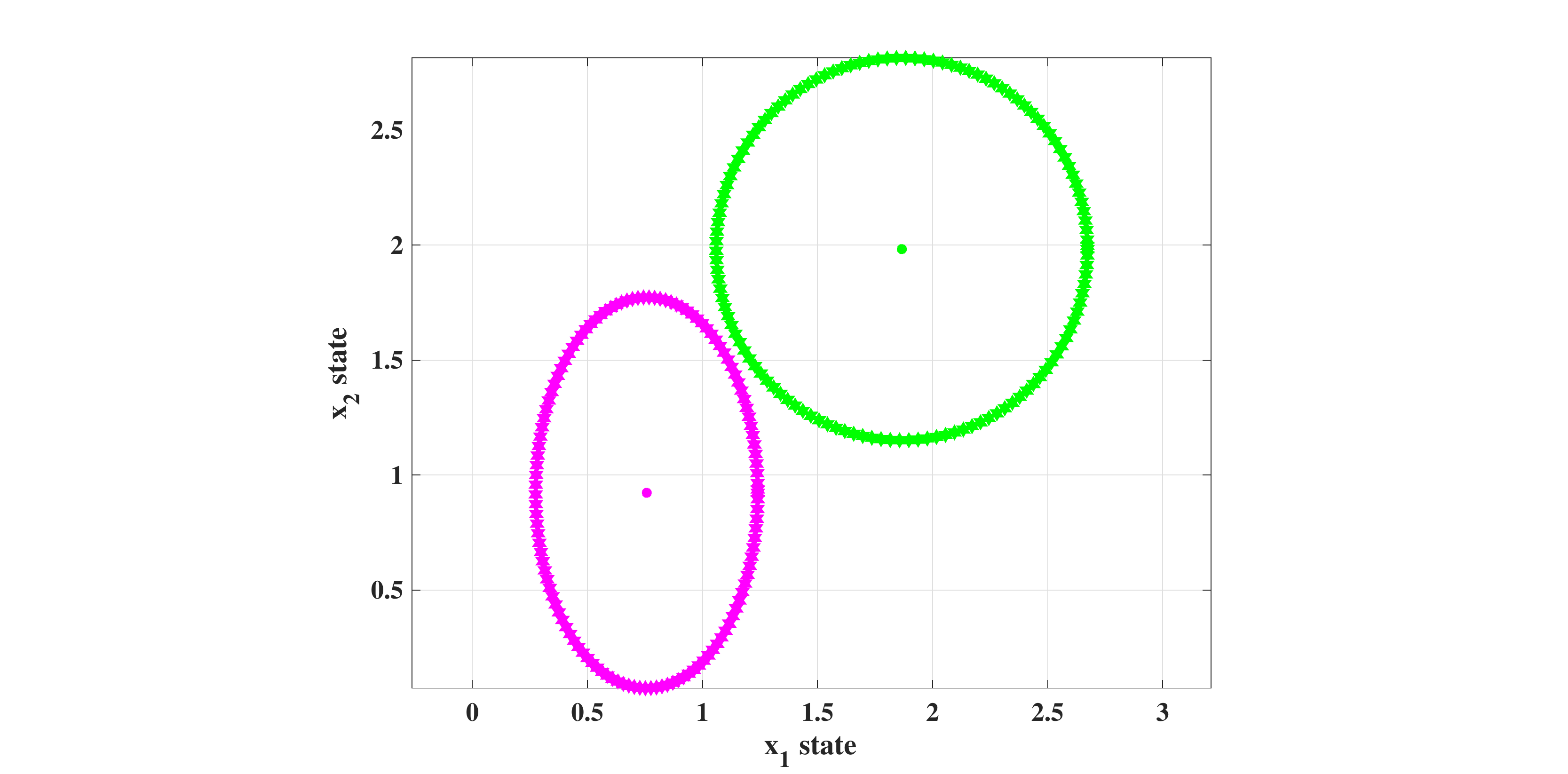} \\
    \small (a) Agent 1\\
      \includegraphics[width=\columnwidth]{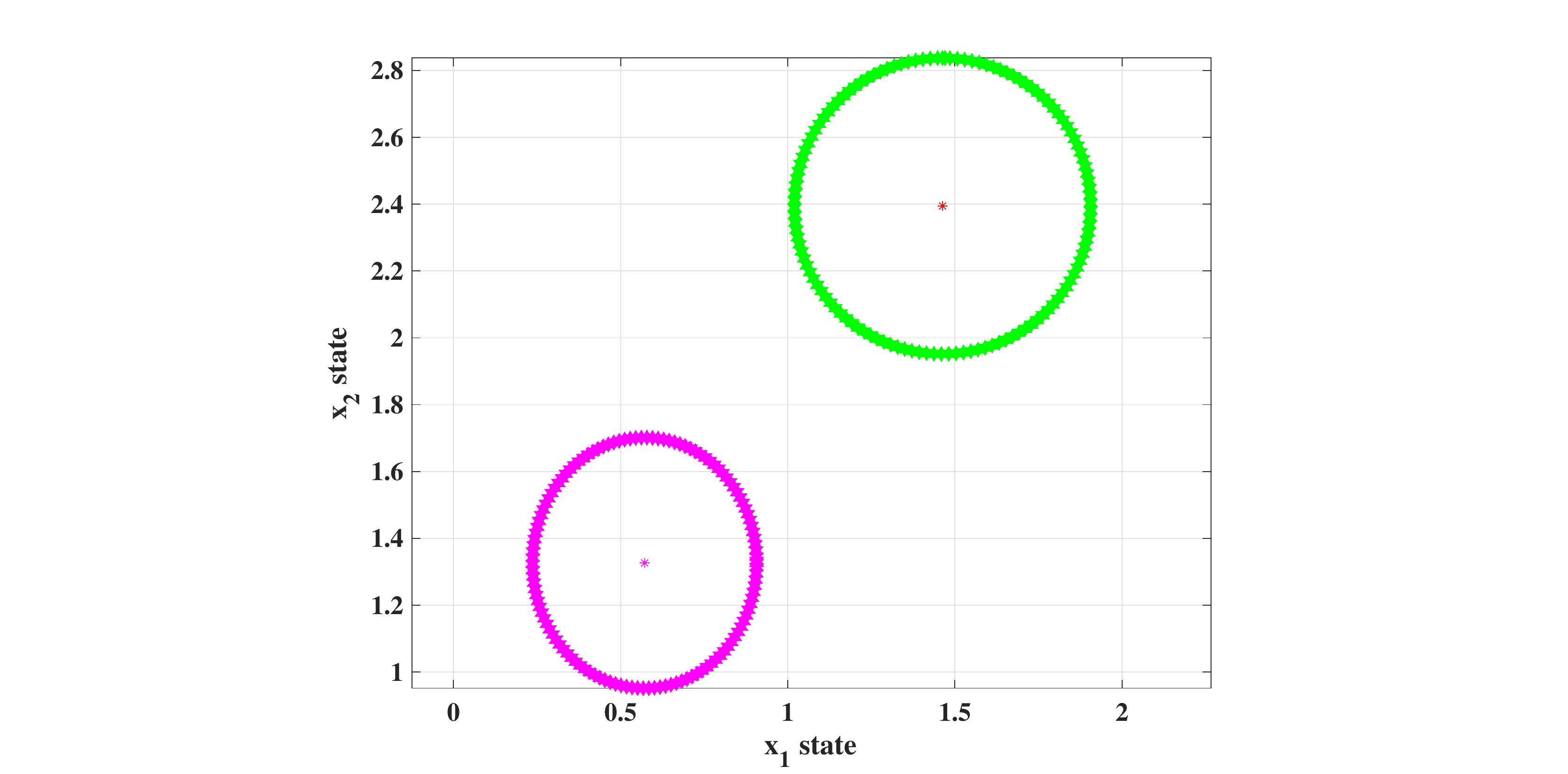} \\
      \small (b) Agent 2
  \end{tabular}
  \medskip

  \caption{Prediction ellipsoid $\mathcal{X}_{i}(k+1 \mid k)$ (pink) and the previous updated estimation ellipsoid $\mathcal{X}_{i}(k \mid k)$ (green) at k=24.}
   \label{fig3}
\end{figure}
\subsection{False Data Injection Attacks on Communication Channel}
In this case, the original data packets were replaced by false ones when they were transferred between two agents via communication channels. Therefore, we consider that the attacker replace the signal transferring from agent 2 to agent 1 via communication channel. In the simulation, the attack vector in \eqref{eq3} is modeled as $\bar{x}_2^c(k)= [5\quad5]^{T}$ from step k=20 to k=25. Therefore, as it is shown in Fig. \ref{fig4}, for agent 1, the prediction ellipsoid set and the estimation ellipsoid set updated with the previous time instant has no intersection. Also, as there is no attack on the communication channel from agent 1 to agent 2, the prediction ellipsoid set and the estimation ellipsoid set updated with the previous time instant for agent 2 has an intersection.
\begin{figure}[thpb]
  \centering
  \begin{tabular}{ c @{\hspace{20pt}}}
    \includegraphics[width=\columnwidth]{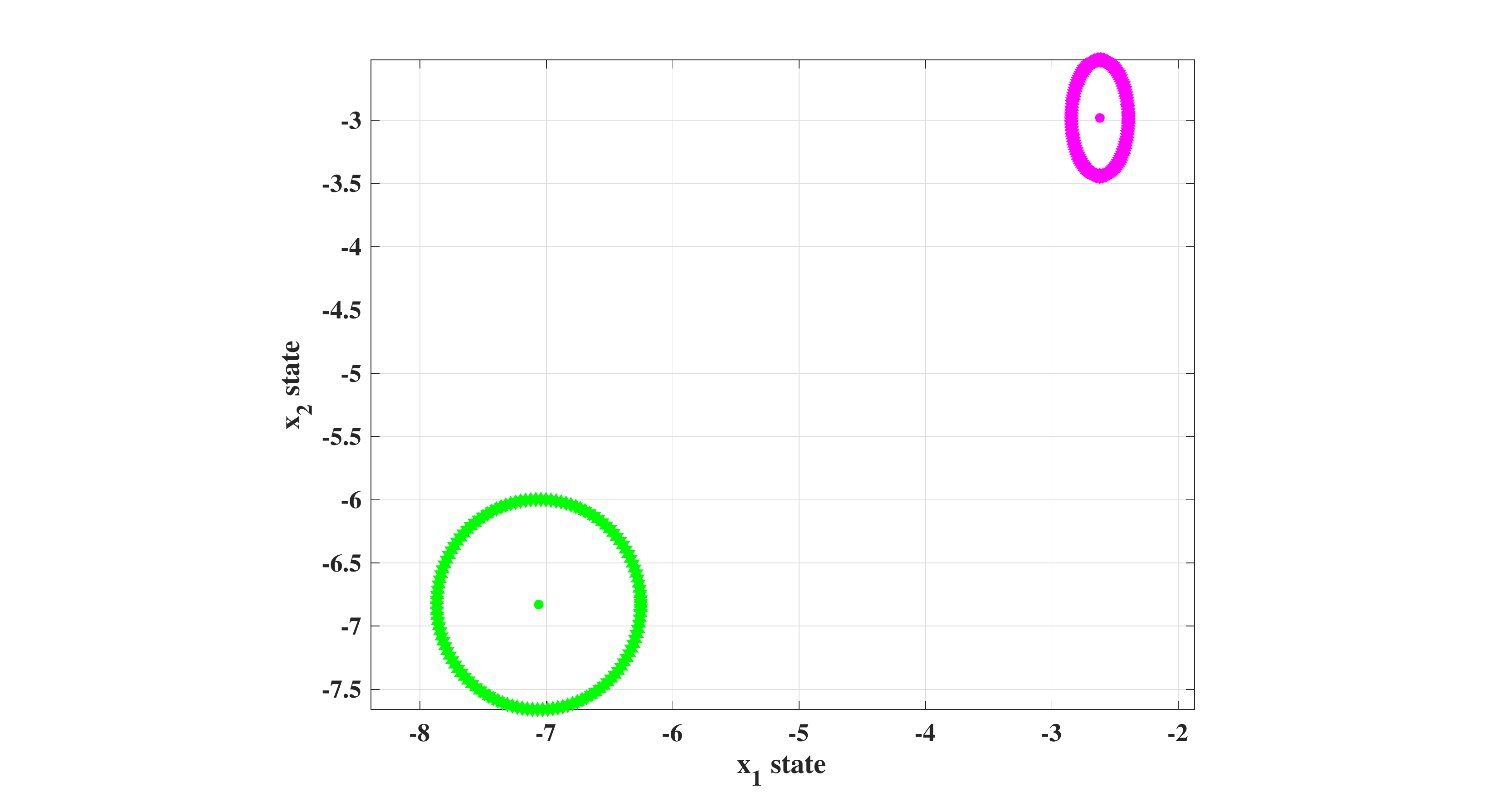} \\
    \small (a) Agent 1\\
      \includegraphics[width=\columnwidth]{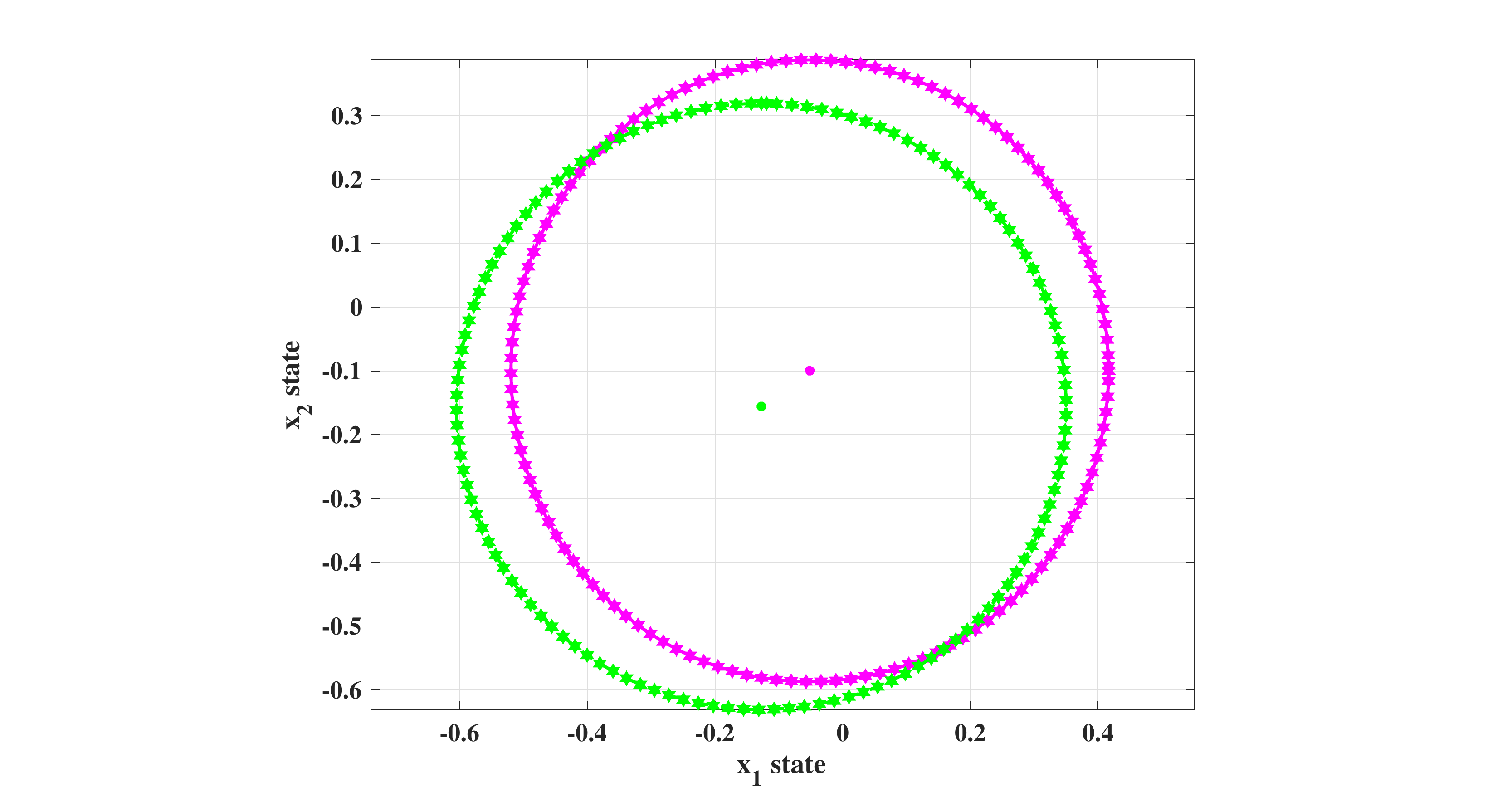} \\
      \small (b) Agent 2
  \end{tabular}
  \medskip

  \caption{Prediction ellipsoid $\mathcal{X}_{i}(k+1 \mid k)$ (pink) and the previous updated estimation ellipsoid $\mathcal{X}_{i}(k \mid k)$ (green) at k=23.}
   \label{fig4}
\end{figure}

Finally, Fig. \ref{fig5} illustrates that we can guarantee the leader following consensus in the attack free system and in the presence of the attacks for all the above scenarios.
\begin{figure}[thpb]
  \centering
  \begin{tabular}{ c @{\hspace{20pt}}}
    \includegraphics[width=\columnwidth]{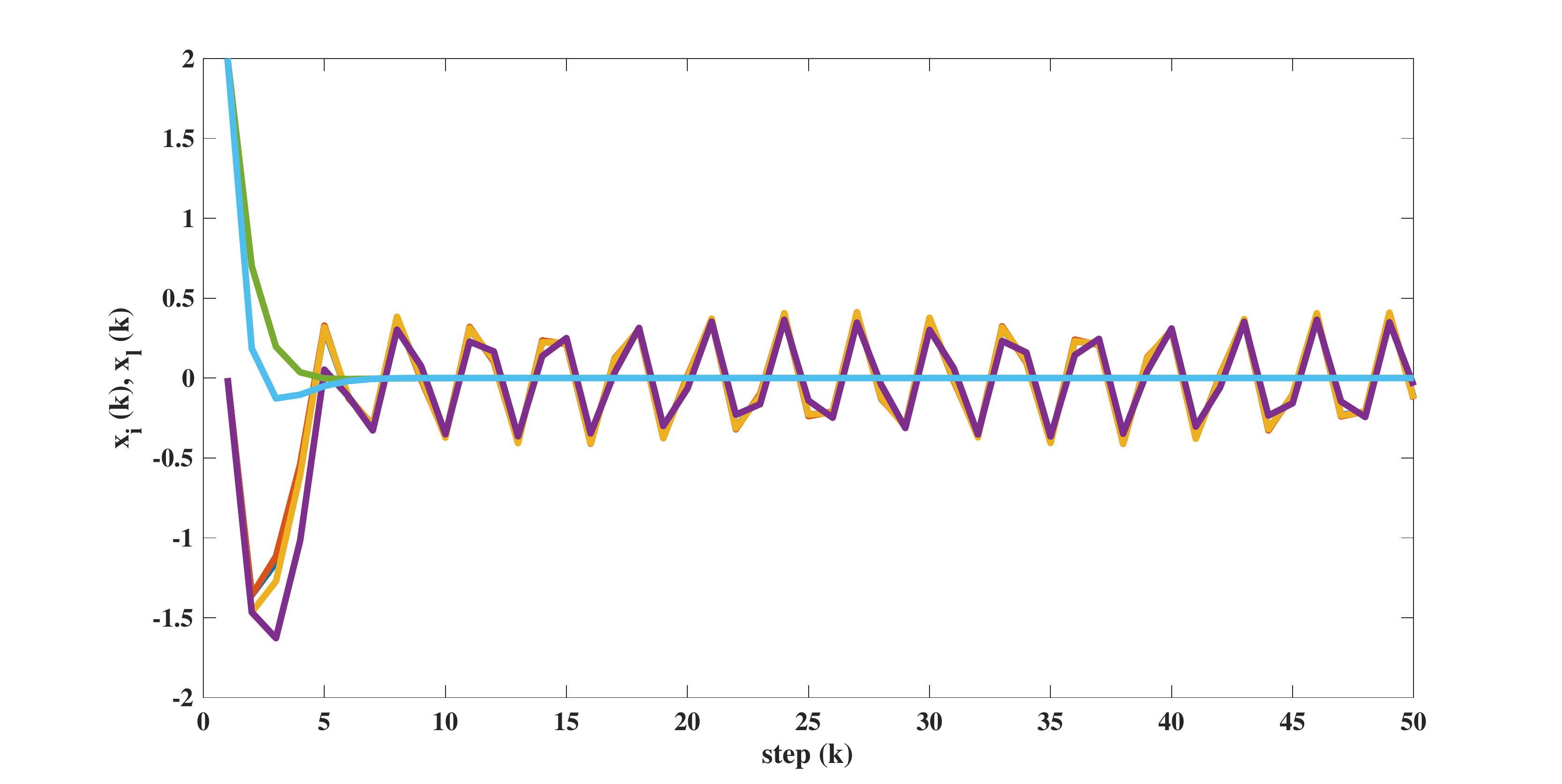} \\
    \small (a) Attack free system\\
      \includegraphics[width=\columnwidth]{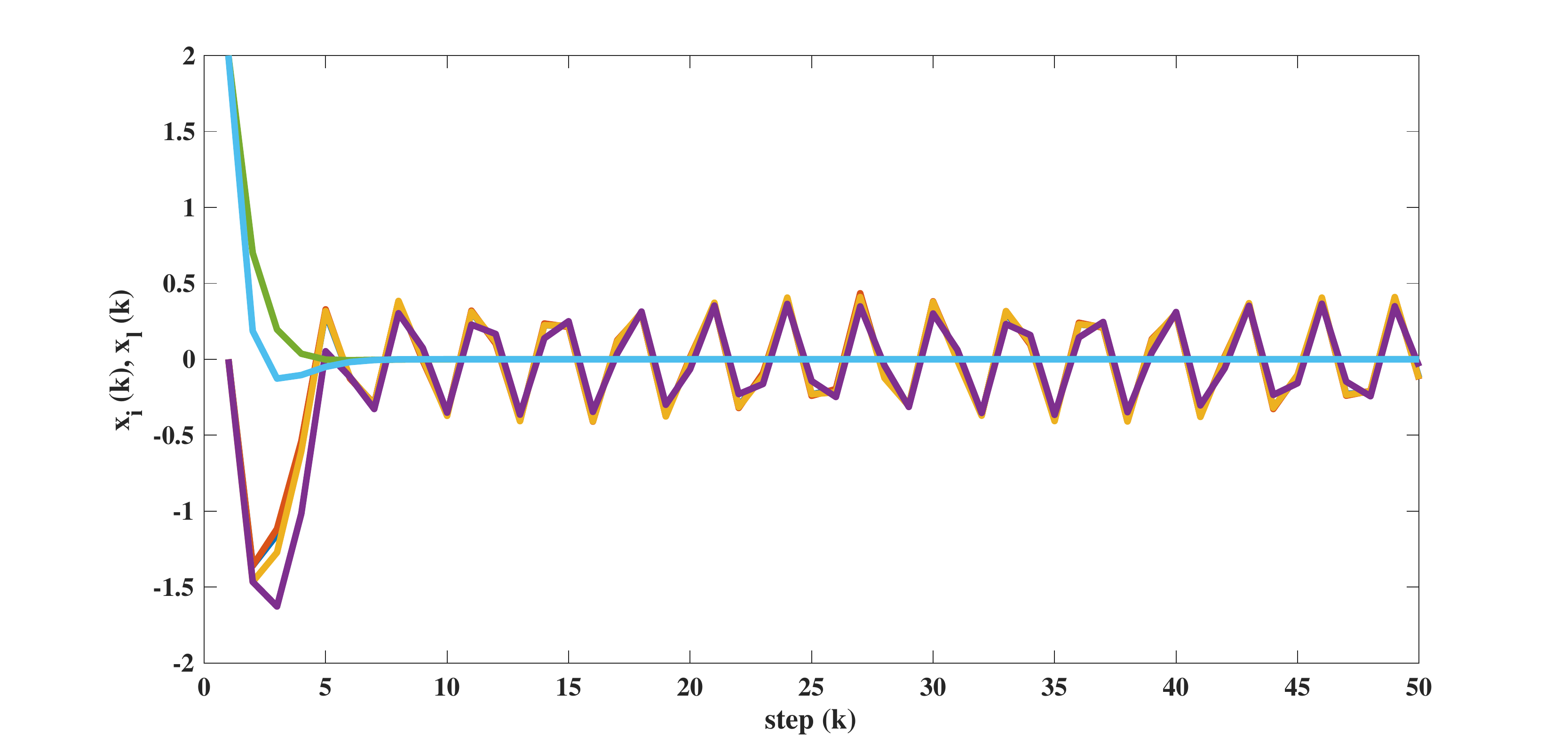} \\
      \small (b) Replay attack on sensor data\\
      \includegraphics[width=\columnwidth]{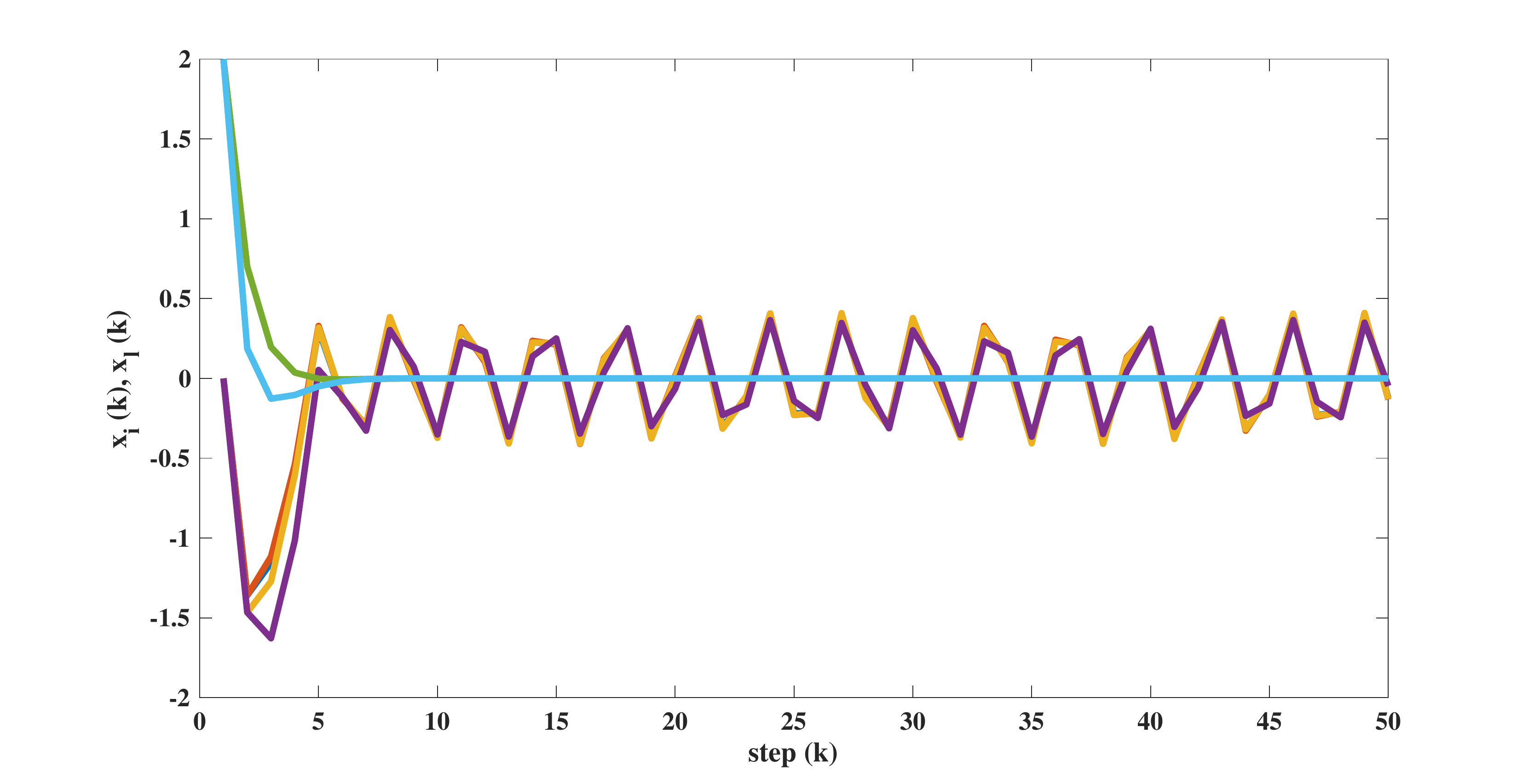} \\
    \small (c) False data injection attack on control signal\\
      \includegraphics[width=\columnwidth]{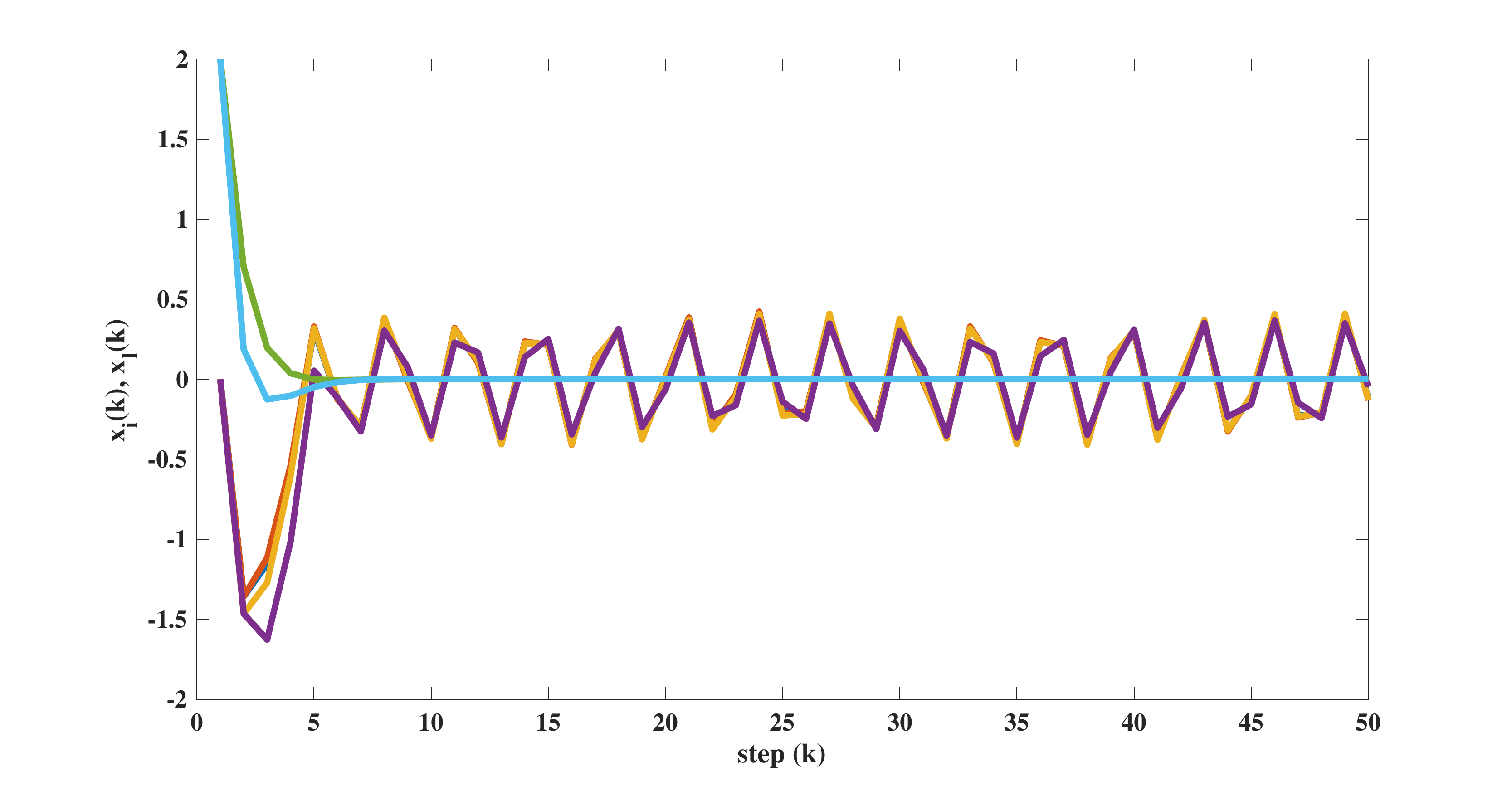} \\
      \small (d) False data injection attack on communication channel
  \end{tabular}
  \medskip
  
  \caption{Leader-following consensus achievements of the agents in all scenarios.}
   \label{fig5}
\end{figure}
\section{Conclusion}
This paper deals with the problem of cyberattack detection in discrete-time leader-following nonlinear multi-agent systems subject to unknown but bounded system noises. For the approximation of the nonlinear systems over the true value of the state, the T-S fuzzy model has been employed. A new fuzzy set-membership filtering method consisting of two steps has been developed for each agent to detect two types of cyberattacks at the time of their occurrence. It has considered the detection of replay attacks and false data injection attacks affecting the leader-following consensus. We proposed recursive algorithms for achieving the consensus protocol and finding the two ellipsoid sets for detecting attacks. The cyberattacks have been detected based on two criteria about the intersections between the ellipsoid sets. Finally, simulation results have been provided to demonstrate the effectiveness of the proposed method.
\bibliographystyle{ieeetr}
\bibliography{bibfile}
\end{document}